\newcommand{\muspace}{\mspace{1mu}}
\DeclareRobustCommand{\scond}{\mathchoice{\muspace\vert\muspace}{\vert}{\vert}{\vert}}
\DeclareRobustCommand{\discint}{\mathchoice{\mspace{-1.5mu}:\mspace{-1.5mu}}{\mspace{-1.5mu}:\mspace{-1.5mu}}{:}{:}}
\def\var{\mathop{\rm Var}\nolimits}%
\def\co{\mathop{\rm co}\nolimits}%
\newcommand{\Pc}{\mathcal{P}}
\newcommand{\Sc}{\mathcal{S}}
\newcommand{\Xc}{\mathcal{X}}
\newcommand{\Yc}{\mathcal{Y}}
\newcommand{\Ncal}{\mathcal{N}}
\newcommand{\Scal}{\mathcal{S}}
\newcommand{\Ucal}{\mathcal{U}}
\newcommand{\Xcal}{\mathcal{X}}
\newcommand{\Ycal}{\mathcal{Y}}
\newcommand{\Cr}{\mathscr{C}}
\newcommand{\Rr}{\mathscr{R}}
\newcommand{\Yt}{{\tilde{Y}}}
\newcommand{\yt}{{\tilde{y}}}
\def\a{\alpha}
\def\e{\epsilon}
\def\l{\lambda}
\DeclareMathOperator\E{\textsf{E}}
\newcommand{\Bern}{\mathrm{Bern}}
\newcommand{\U}{\mathrm{Unif}}
\def\textiid{i.i.d.\@\xspace}
\newcommand\iid{\ifmmode\text{ i.i.d. } \else \textiid \fi}
\def\mathllap{\mathpalette\mathllapinternal}
\def\mathllapinternal#1#2{%
  \llap{$\mathsurround=0pt#1{#2}$}}
\def\clap#1{\hbox to 0pt{\hss#1\hss}}
\def\mathclap{\mathpalette\mathclapinternal}
\def\mathclapinternal#1#2{%
  \clap{$\mathsurround=0pt#1{#2}$}}
\let\oldstackrel\stackrel
\renewcommand{\stackrel}[2]{\oldstackrel{\mathclap{#1}}{#2}}
\renewcommand{\hbar}{h\mathllap{\overline{\vphantom{h}\hphantom{\rule{4.6pt}{0pt}}}\mspace{0.77mu}}}
\newcommand{\urltilde}{\kern -.06em\lower -.06em\hbox{~}\kern .02em}
\newtheorem{theorem}{\textbf{Theorem}}
\newtheorem{lemma}{\textbf{Lemma}}
\newtheorem{proposition}{\textbf{Proposition}}
\newtheorem{definition}{\textbf{Definition}}
\newtheorem{example}{\textbf{Example}}
\newtheorem{remark}{\textbf{Remark}}
\newcommand{\bq}{\bar{p}_2}
\newcommand{\bp}{\bar{p}_1}
\newcommand{\q}{p_2}
\newcommand{\p}{p_1}
\newcommand{\Ytcal}{\tilde{\Ycal}}
\newcommand{\argmax}{\operatornamewithlimits{arg\,max}}
\newcommand{\CC}{\mathfrak{C}}
\begin{document}
\title{Capacity Theorems for Broadcast Channels with Two Channel State Components Known at the Receivers} 

\author{Hyeji Kim and Abbas El Gamal\\
Department of Electrical Engineering\\
 Stanford University\\
 Email: hyejikim@stanford.edu, abbas@ee.stanford.edu
\thanks{ This work was partially supported by Air Force grant FA9550-10-1-0124.  This paper was presented in part at \emph{Proc. IEEE Int. Symp. Inf. Theory, Hawaii, 2014}.  }%
}
\maketitle

\begin{abstract} 
We establish the capacity region of several classes of broadcast channels with random state in which the channel to each user is selected from two possible channel state components and the state is known only at the receivers. When the channel components are deterministic, we show that the capacity region is achieved via Marton coding. 
This channel model does not belong to any class of broadcast channels for which the capacity region was previously known and is useful in studying wireless communication channels when the fading state is known only at the receivers. We then establish the capacity region when the channel components are ordered, e.g., degraded. In particular we show that the capacity region for the broadcast channel with degraded Gaussian vector channel components is attained via Gaussian input distribution. Finally, we extend the results on ordered channels to two broadcast channel examples with more than two channel components, but show that these extensions do not hold in general.
\end{abstract}
\section{Introduction}
Consider the discrete memoryless broadcast channel (DM-BC) with random (IID) state $(\Xc \times \Sc, p(y_1,y_2|x,s)p(s), \Yc_1 \times \Yc_2)$ with the state $S$ known only at the receivers. Assume the setup in which the sender wishes to transmit a common message $M_0 \in [1:2^{nR_0}]$ to both receivers and private messages $M_j \in [1:2^{nR_j}]$ to receiver $j\in \{1,2\}$ as depicted in Figure~\ref{figure1}.  

It is well known that this broadcast channel with state setup can be viewed as a general DM-BC with input $X$ and outputs $(Y_1,S)$ and $(Y_2,S)$. Hence the definitions of a $(2^{nR_0},2^{nR_1},2^{nR_2})$ code, achievability and the capacity region $\Cr$ are the same as for the general broadcast channels~\cite{EG--Kim2011}. Moreover, the capacity region for this broadcast channel with state setup is not known in general. The Marton inner bound and the \emph{UV} outer bound on the general broadcast channel hold for this channel and they coincide when the channel $X \to (Y_1,S),(Y_2,S)$ falls into any of the classes of the broadcast channel for which the capacity region is known (see \cite{Geng--Gohari--Nair--Yu2014} for examples of these classes). Beyond these classes, there have been some efforts on evaluating inner bounds on the capacity region of the Gaussian fading BC model, including superposition coding by Jafarian and Vishwanath~\cite{Jafarian--Vishwanath2011}, time division with power control by Liang and Goldsmith~\cite{Liang--Goldsmith2005}, and superposition of binary inputs motivated by a capacity achieving strategy for a layered erasure broadcast channel by Tse and Yates~\cite{Tse--Yates2012}. 

\begin{figure}[h]
\begin{center}
\psfrag{M}[r]{$M_0, M_1, M_2\ $}
\psfrag{E}[c]{Encoder}
\psfrag{X}[c]{$X^n$}
\psfrag{S}[c]{$p(s)$}
\psfrag{T}[c]{$S^n$}
\psfrag{G}[c]{$p(y_1,y_2|x,s)$}
\psfrag{Y}[c]{$Y_1^n$}
\psfrag{C}[c]{$Y_2^n$}
\psfrag{D}[c]{Decoder 1}
\psfrag{F}[c]{Decoder 2}
\psfrag{A}[l]{$\hat{M}_{01}, \hat{M}_1$}
\psfrag{B}[l]{$\hat{M}_{02}, \hat{M}_2$}
\includegraphics[scale=0.4]{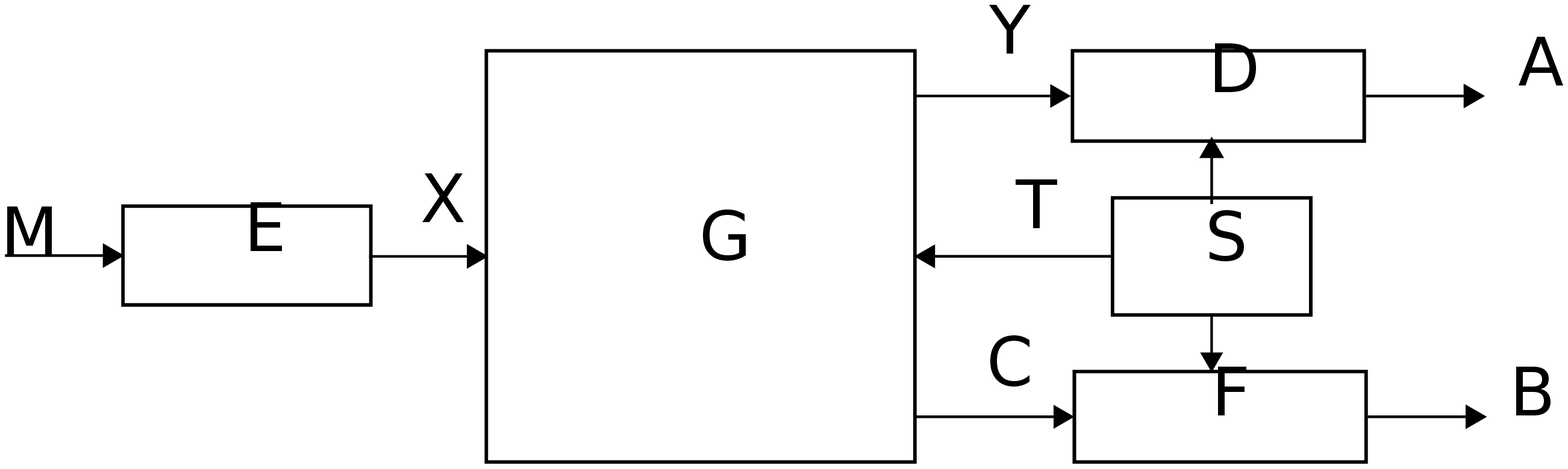}\\
\end{center}
\caption{Broadcast channel with state known only at the receivers.}\label{figure1}
\end{figure}


In this paper we focus on the special class of the broadcast channel with state in Figure~\ref{figure1} in which there are only two channel state components as depicted in Figure~\ref{fig:model}. In this model, which we refer to as the {\em broadcast channel with two channel states} (BC-TCS), the state $S=(S_1,S_2)\in \{1,2\}^2$ with $p_{S_1}(1)=p_1, p_{S_1}(2)=1-p_1=\bar p_1$ and $p_{S_2}(1)=p_2, p_{S_2}(2)=\bar p_2$, and the two possible channel components are denoted by $\Yt_1 \sim p(\tilde{y}_1|x)$ and $\Yt_2 \sim p(\tilde{y}_2|x)$. The outputs of the BC-TCS is 
\allowdisplaybreaks
\begin{align}\begin{split}\label{model}
Y_1 = \begin{cases}	\Yt_1 &\text{ if } S_1=1,\\
				\Yt_2 &\text{ if } S_1=2,
	\end{cases}\\
Y_2 = \begin{cases} \Yt_1 &\text{ if } S_2=1,\\
				\Yt_2 &\text{ if } S_2= 2.
	\end{cases}
\end{split} \end{align}
Without loss of generality, we assume throughout that $p_1 \geq p_2$ and that receiver $j=1,2$ knows the state sequence $S^n$ but the sender does not.~\footnote{Since the capacity region of the broadcast channel depends only on its marginal distributions~\cite{EG--Kim2011}, we only need to specify the marginal pmfs of $S_1$ and $S_2$. Moreover, it suffices to assume that receiver $j=1,2$ knows only its state sequence $S_j^n$.}

\begin{figure}[h]
\begin{center}
\psfrag{X}[l]{$X$}
\psfrag{F}[c]{\textcolor{black}{$~p_{\Yt_1|X}(y_1|x)$}}
\psfrag{G}[c]{\textcolor{black}{$~p_{\Yt_2|X}(y_2|x)$}}
\psfrag{H}[c]{\textcolor{black}{$~p_{\Yt_1|X}(y_2|x)$}}
\psfrag{I}[c]{\textcolor{black}{$~p_{\Yt_2|X}(y_1|x)$}}
\psfrag{Y}[l]{$Y_1$}
\psfrag{Z}[l]{$Y_2$}
\small
\psfrag{A}[c]{(a) $(S_1,S_2)=({\color{black}1},{\color{black}1}), p_{S_1}(1)=p_1, p_{S_2}(1)=p_2$}
\psfrag{B}[c]{(b) $(S_1,S_2)=({\color{black}1},{\color{black}2}), p_{S_1}(1)=p_1, p_{S_2}(2)=\bar{p}_2$}
\psfrag{C}[c]{(c) $(S_1,S_2)=({\color{black}2},{\color{black}1}), p_{S_1}(2)=\bar{p}_1, p_{S_2}(1)=p_2$}
\psfrag{D}[c]{(d) $(S_1,S_2)=({\color{black}2},{\color{black}2}), p_{S_1}(2)=\bar{p}_1, p_{S_2}(2)=\bar{p}_2$}
\includegraphics[scale=0.7]{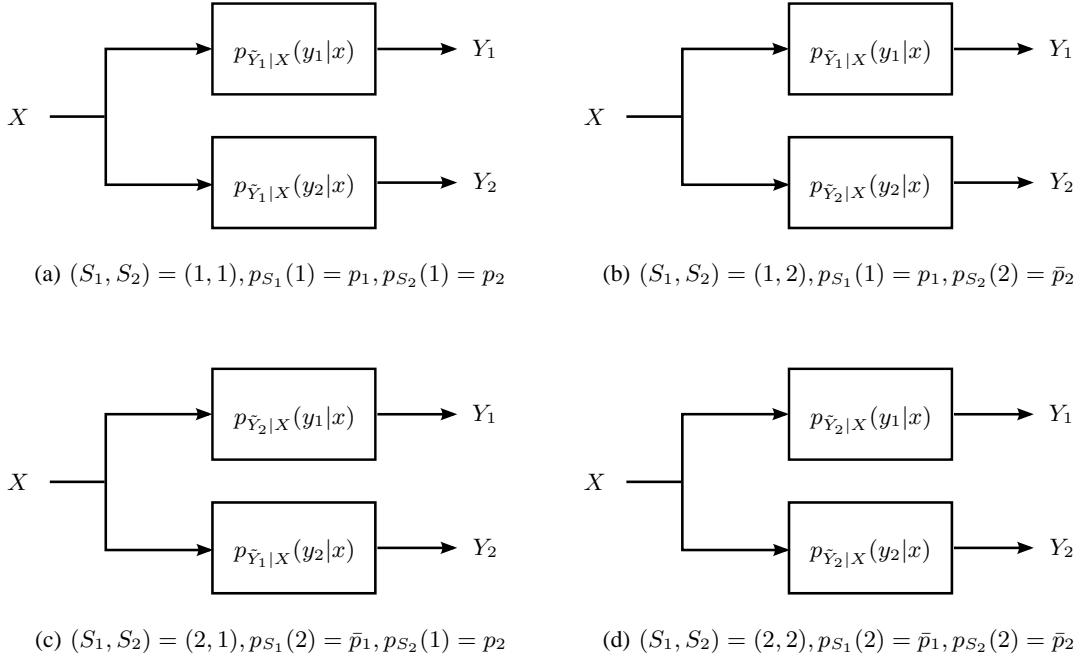}
\end{center}
\caption{Broadcast channel with two channel state components.}\label{fig:model}
\end{figure}


In~\cite{Kim--EG2014}, we established the private message capacity region for the special case of the BC-TCS in which the state components are deterministic functions, i.e., $\Yt_1 = f_1(X)$ and $\Yt_2 = f_2(X)$. Achievability is established using Marton coding~\cite{Marton1979}. The key observation is that the auxiliary random variables in the Marton region characterization, $U_1$ and $U_2$, are always set to $f_1,f_2,X$, or $\emptyset$. In particular if the channel $p(y_1|x)$  is more likely to be $f_1$ than the channel $p(y_2|x)$, then $(U_1, U_2)$ are set to $(X,\emptyset),(\emptyset,X)$, or $(f_1,f_2)$. The converse is established by showing that the Marton inner bound with these extreme choices of auxiliary random variables coincides with the \emph{UV} outer bound~\cite{Nair--EG2007}. It is important to note that this class of broadcast channels with two deterministic channel components (BC-TDCS) does not belong to any class of broadcast channels for which the capacity region is known. It also provides yet another class of broadcast channels for which Marton coding is optimal. Moreover, the BC-TDCS model can be used to approximate certain fading broadcast channels in high SNR (see Example~\ref{ex:finite} in Section~\ref{sec:Deterministic}). 

In this paper we provide a complete proof for the result in~\cite{Kim--EG2014} and extend it to the case with common message (see Section~\ref{sec:Deterministic}). In addition, we include several new results on the capacity region of the BC-TCS. In Section~\ref{sec:Ordered}, we study the case when the channel components are ordered, which models, for example, a wireless downlink channel in which the channel to each user can be either ``strong'' or ``weak''. We show that if the BC $p(\yt_1,\yt_2|x)$ is degraded, less noisy, more capable, or dominantly c-symmetric, then the corresponding BC-TCS $p(y_1,y_2,s|x)$ is degraded, less noisy, more capable, or dominantly c-symmetric, and the capacity region is achieved via superposition coding. This is surprising (and as we will show does not extend to more than two components in general) because the sender does not know the state, hence does not know which of the two channels $p(y_1,s|x)$ or $p(y_2,s|x)$ is stronger. We further show that the capacity region of the BC-TCS with degraded Gaussian vector channel components, which is a special case of the BC-TCS with degraded channel components, is attained by Gaussian channel input. This is again unexpected because for the general degraded fading Gaussian BC (where we know that one channel is always a degraded version of the other), the optimizing input distribution is not Gaussian~\cite{Abbe--Zheng2009}. 
In Section~\ref{sec:More}, we present results on the broadcast channel with more than two channel components. We establish the capacity region when there are three BEC or BSC channel components and show that there is a gap between the Marton inner bound and the UV outer bound when there are four BSC channel components. Hence our results for the two channel state components do not extend to more than two state components in general.

\section{Deterministic channel state components}\label{sec:Deterministic}
In this section, we consider the BC-TCS with two deterministic channel components $\Yt_1 = f_1(X)$ and 
$\Yt_2 = f_2(X)$, henceforth referred to as BC-TDCS. We show that the capacity region of the BC-TDCS is achieved using Marton coding.

\begin{theorem}[private message capacity region for BC-TDCS~\cite{Kim--EG2014}]\label{cap:private}
\textnormal{
The private message capacity region of the BC-TDCS $(\Xc \times \Sc, p(s)p(y_1,y_2|x,s), \Yc_1 \times \Yc_2)$ with the state known only at the receivers is
\allowdisplaybreaks
\begin{align}
\begin{split}\label{capacity:BC-TDCS}
\Cr &= \co\{\Rr_1 \cup \{(C_1,0)\} \cup \{(0,C_2)\}\},
\end{split}\end{align}
where $C_j = \max_{p(x)} I(X;Y_j|S)$ for $j=1,2$, and
\begin{align*}
 \Rr_1 = \{(R_1,R_2)\colon R_{1} &\leq I(f_1;Y_1|S),\\
					     R_{2} &\leq I(f_2;Y_2|S),\\
					     R_1+R_2 &\leq I(f_1;Y_1|S)+I(f_2;Y_2|S)-I(f_1;f_2) \text{ for some }p(x) \in \Pc\},
\end{align*}
where
\begin{align*}
\Pc = \{&\argmax_{p(x)}\ \big(I(f_1;Y_1|S)+\lambda (I(f_2;Y_2|S)-I(f_1;f_2))\big) \text{ for }\bar{p}_1/\bar{p}_2 \leq \lambda \leq 1,\\
&\argmax_{p(x)}\ (I(f_1;Y_1|S)-I(f_1;f_2)+\nu I(f_2;Y_2|S)) \text{ for } 1 \leq \nu \leq p_1/p_2\}.
\end{align*}
}\end{theorem}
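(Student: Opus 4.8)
The plan is to establish the two inclusions of \eqref{capacity:BC-TDCS} separately: achievability of the region on the right by Marton coding, and a matching converse via the \emph{UV} outer bound~\cite{Nair--EG2007}, both applied to the enhanced DM-BC with input $X$ and outputs $(Y_1,S)$ and $(Y_2,S)$. The algebraic workhorse is that, since the encoder does not observe $S$, the state is independent of $X$ and all relevant single-letter auxiliaries may be taken independent of $S$, while $Y_j$ equals $\Yt_1=f_1(X)$ with probability $p_j$ and $\Yt_2=f_2(X)$ otherwise. Hence every information quantity involving $(Y_j,S)$ splits into a $p_j$-weighted combination of quantities depending only on the deterministic functions $f_1,f_2$: for any $W$ independent of $S$, $I(W;Y_1,S)=p_1 I(W;f_1)+\bar{p}_1 I(W;f_2)$, and, since $f_1,f_2$ are functions of $X$, $I(X;Y_2,S|U)=p_2 H(f_1|U)+\bar{p}_2 H(f_2|U)$, with analogous identities for the other receiver.

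For achievability I would start from Marton's inner bound for the enhanced channel, namely $R_1\le I(U_1;Y_1,S)$, $R_2\le I(U_2;Y_2,S)$, $R_1+R_2\le I(U_1;Y_1,S)+I(U_2;Y_2,S)-I(U_1;U_2)$ over $p(u_1,u_2)p(x|u_1,u_2)$, together with the usual convexification over a time-sharing variable. Substituting $(U_1,U_2)=(f_1,f_2)$ and using the identities above reproduces, for each input pmf $p(x)$, precisely the pentagon appearing in the definition of $\Rr_1$, while $(U_1,U_2)=(X,\emptyset)$ and $(\emptyset,X)$ yield, after maximizing over $p(x)$, the corner points $(C_1,0)$ and $(0,C_2)$. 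Since $\Pc$ is a subset of all input pmfs, the right-hand side of \eqref{capacity:BC-TDCS} lies in the convex hull of these achievable regions and is therefore achievable.

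The converse is the substantive direction. Applying the \emph{UV} outer bound to the enhanced channel, any achievable $(R_1,R_2)$ obeys $R_1\le I(U;Y_1,S)$, $R_2\le I(V;Y_2,S)$, $R_1+R_2\le I(U;Y_1,S)+I(X;Y_2,S|U)$, and $R_1+R_2\le I(X;Y_1,S|V)+I(V;Y_2,S)$ for some $p(u,v,x)$. Using the identities above and $\bar{p}_2-\bar{p}_1=p_1-p_2\ge 0$, the first three become
\begin{align*}
R_1 &\le p_1 I(U;f_1)+\bar{p}_1 I(U;f_2), \qquad R_2 \le p_2 I(V;f_1)+\bar{p}_2 I(V;f_2),\\
R_1+R_2 &\le p_1 H(f_1)+\bar{p}_1 H(f_2)-(p_1-p_2)\bigl(H(f_1|U)-H(f_2|U)\bigr),
\end{align*}
and the fourth is the analogue obtained by interchanging the two receivers (equivalently $U\leftrightarrow V$, $p_1\leftrightarrow p_2$). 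As $\Cr$ is closed and convex and lies between the achievable region and this outer bound, it suffices to verify that, for every weight vector, the supporting value of the outer bound is attained on the right-hand side of \eqref{capacity:BC-TDCS}. Fixing a weight vector and optimizing the resulting linear functional over the outer-bound polytope reduces the problem to a maximization over $(U,V,X)$; the crucial step is to show this maximum is attained at an extremal auxiliary pair, $U\in\{f_1,X,\emptyset\}$ and $V\in\{f_2,X,\emptyset\}$. This is exactly where the hypothesis $p_1\ge p_2$ enters: the coefficient $-(p_1-p_2)\le 0$ of $H(f_1|U)-H(f_2|U)$ in the sum-rate bound steers an optimal $U$ toward resolving $f_1$ (consistent with $U=f_1$) rather than $f_2$, and symmetrically for $V$. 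Granting this reduction, $(U,V)=(f_1,f_2)$ returns precisely the pentagon of $\Rr_1$; its supporting value at the vertex maximizing $R_1$ gives the first family in $\Pc$, valid for $\bar{p}_1/\bar{p}_2\le\lambda\le 1$, and at the vertex maximizing $R_2$ the second family, valid for $1\le\nu\le p_1/p_2$ (note $\bar{p}_1/\bar{p}_2\le 1\le p_1/p_2$), while $U=X$ or $V=X$ contribute only the corners $(C_1,0)$ and $(0,C_2)$, which are the boundary points active for weights outside the range $[\bar{p}_1/\bar{p}_2,\,p_1/p_2]$.

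The step I expect to be the main obstacle is precisely this auxiliary-reduction in the converse: ruling out every intermediate choice of $U$ and $V$, showing that the extremal choices already exhaust the \emph{UV} outer bound, and then matching the resulting optima to the $\argmax$ characterization of $\Pc$ over exactly the multiplier ranges $[\bar{p}_1/\bar{p}_2,1]$ and $[1,p_1/p_2]$. This requires a careful extreme-point/convexity analysis that leans on both the deterministic structure $Y_j\in\{f_1(X),f_2(X)\}$ and the ordering $p_1\ge p_2$. The achievability direction and the algebraic reductions of $I(U;Y_1,S)$ and $I(X;Y_2,S|U)$ to the entropy expressions above are, by contrast, routine.
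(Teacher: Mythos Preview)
Your plan is the paper's proof: Marton achievability with $(U_1,U_2)\in\{(f_1,f_2),(X,\emptyset),(\emptyset,X)\}$, converse via the \emph{UV} outer bound and supporting hyperplanes, with the same case split on $\lambda$ at the thresholds $\bar p_1/\bar p_2$ and $p_1/p_2$, closed off by Eggleston's lemma (Lemma~\ref{subseteq}). The step you flagged as the obstacle is handled not by an abstract extreme-point argument but by a direct entropy identity: for $\bar p_1/\bar p_2<\lambda\le 1$ one bounds $R_1+\lambda R_2\le I(U;Y_1|S)+\lambda H(Y_2|U,S)$, whose $U$-dependent part is $(\lambda\bar p_2-\bar p_1)H(f_2|U)+(\lambda p_2-p_1)H(f_1|U)$; rewriting via $H(f_2|U)-H(f_1|U)=H(f_2|f_1,U)-H(f_1|f_2,U)$ gives $(\lambda-1)H(f_1|U)+(\lambda\bar p_2-\bar p_1)\bigl(H(f_2|f_1,U)-H(f_1|f_2,U)\bigr)\le(\lambda\bar p_2-\bar p_1)H(f_2|f_1)$, with equality at $U=f_1$. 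The range $1<\lambda\le p_1/p_2$ is handled symmetrically with $V=f_2$, and outside $[\bar p_1/\bar p_2,p_1/p_2]$ both coefficients are nonpositive so $U=X$ (resp.\ $V=X$) is optimal, yielding the corners $(C_1,0)$ and $(0,C_2)$.
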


\begin{proof}
For achievability we use Marton coding which achieves the set of rate pairs $(R_1,R_2)$ such that
\begin{align}\label{capacity:BC-TDCS}
\begin{split}
R_1 &\leq I(U_1;Y_1|S),\\
R_2 &\leq I(U_2;Y_2|S),\\
R_1+R_2 &\leq I(U_1;Y_1|S)+I(U_2;Y_2|S)-I(U_1;U_2)
\end{split}
\end{align}
for some pmf $p(u_1,u_2,x)$.

Note that the rate pair $(C_1,0)$ satisfies the inequalities~\eqref{capacity:BC-TDCS} for $p(x) = \arg\max I(X;Y_1|S)$ and $(U_1,U_2) = (X,\emptyset)$. Similarly, the rate pair $(0,C_2)$ satisfies the inequalities~\eqref{capacity:BC-TDCS} for $p(x)= \arg\max I(X;Y_2|S)$ and $(U_1,U_2) = (\emptyset, X)$. Thus $(C_1,0)$ and $(0,C_2)$ are achievable. Now let $\Rr'_1$ be the set of rate pairs that satisfy~\eqref{capacity:BC-TDCS} for some $p(x)$ and $(U_1,U_2)=(f_1,f_2)$. We can easily see that $\Rr_1 \subseteq \Rr'_1$. Thus $\Cr$ is achievable via Marton coding and time-sharing.


To establish the converse, we show that $\Cr$ coincides with the \emph{UV} outer bound. The \emph{UV} outer bound for the broadcast channel with state known at the receivers states that if a rate pair $(R_1,R_2)$ is achievable, then it must satisfy the inequalities
\begin{align}\begin{split}\label{uv}
R_1 &\leq I(U;Y_1|S),\\
R_2 &\leq I(V;Y_2|S),\\
R_1+R_2 &\leq I(U;Y_1|S)+I(X;Y_2|U,S),\\
R_1+R_2 &\leq I(V;Y_2|S)+I(X;Y_1|V,S)
\end{split}\end{align}
for some pmf $p(u,v,x)$.
Let this outer bound be denoted by $\bar \Rr$. Clearly $\Cr \subseteq \bar \Rr$. We now show that every supporting hyperplane of $\bar{\Rr}$ intersects $\Cr$, i.e., 
for all $\lambda \geq 0$,
\begin{align}\label{ineq1} 
\max_{(R_1,R_2)\in\bar{\Rr}} (R_1+\lambda R_2) \leq \max_{(r_1,r_2)\in\Cr} (r_1+\lambda r_2).
\end{align}
We first show that inequality~\eqref{ineq1} holds for $0\leq \lambda \leq 1$. Consider
\begin{align}
\max_{(R_1,R_2) \in \bar \Rr}(R_1+\lambda R_2)&\le \max_{p(u,x)}(I(U;Y_1|S)+\lambda H(Y_2|U,S))\nonumber\\
&=\max_{p(x)}\Big(H(Y_1|S)+\max_{p(u|x)}( \lambda H(Y_2|U,S)-H(Y_1|U,S))\Big)\nonumber\\
&=\max_{p(x)}\Big(\p H(f_1)+ \bp H(f_2)  +\max_{p(u|x)}( (\lambda \bq-\bp)H(f_2|U) + (\lambda \q-\p)H(f_1|U)) \Big).\nonumber
\end{align}
We now consider different ranges of $0\leq \lambda \le 1$. 
\begin{itemize}
\item For $0\leq \lambda \leq \bp/\bq$, $(\lambda\bq-\bp)H(f_2|U) + (\lambda \q-\p)H(f_1|U) \leq 0$ for any fixed $p(x)$ with equality if $U=X$. Thus,
\begin{align*}
\max_{(R_1,R_2) \in \bar{\Rr}}(R_1+\lambda R_2) &\le \max_{p(x)} (\p H(f_1)+\bp H(f_2)).
\end{align*}
Since $(C_1, 0) = (\max_{p(x)} (\p H(f_1)+\bp H(f_2)), 0) \in \Cr$. the inequality~\eqref{ineq1} holds.
\item For $\bp/\bq<\lambda \leq 1$, consider 
\begin{align*}
(\lambda\bq-\bp)H(f_2|U) + (\lambda \q-\p)H(f_1|U)
&=(\lambda-1)H(f_1|U) + (\lambda\bq-\bp) (H(f_2|f_1,U)-H(f_1|f_2,U))\nonumber\\
&\leq (\lambda\bq-\bp)H(f_2|f_1)
\end{align*}
for any fixed $p(x)$ with equality if $U=f_1$. Thus,
\begin{align*}
\max_{(R_1,R_2) \in \bar \Rr}(R_1+\lambda R_2) &\le \max_{p(x)}(\p H(f_1)+\bp H(f_2)+(\lambda \bq-\bp)H(f_2|f_1))\\
&= \max_{p(x)}(I(f_1;Y_1|S) +\l (I(f_2;Y_2|S) - I(f_1;f_2)))\\
&= \max_{p(x) \in \Pc}(I(f_1;Y_1|S) +\l (I(f_2;Y_2|S) - I(f_1;f_2)))
\end{align*}
Finally since $(I(f_1;Y_1|S), I(f_2;Y_2|S) - I(f_1;f_2)) \in \Cr$ for $p(x) \in \Pc$, the inequality~\eqref{ineq1} holds.
\end{itemize}

We now prove the inequality~\eqref{ineq1} for $\l > 1$. We consider the equivalent maximization problem: $\max_{(R_1,R_2) \in \bar \Rr}(\lambda^{-1}R_1+ R_2)$. Consider\begin{align}
\max_{(R_1,R_2) \in \bar \Rr}(\lambda^{-1}R_1+R_2)&\le \max_{p(v,x)} (\lambda^{-1} H(Y_1|V,S)+I(V;Y_2|S))\nonumber\\
& =\max_{p(x)}\Big(H(Y_2|S)+\max_{p(v|x)}(\lambda^{-1} H(Y_1|V,S) - H(Y_2|V,S)) \Big)\nonumber\\
&=\max_{p(x)}\Big(\q H(f_1)+ \bq H(f_2) + \max_{p(v|x)}((\lambda^{-1} \bp-\bq)H(f_2|V) + (\lambda^{-1}\p-\q)H(f_1|V)) \Big).\nonumber
\end{align}
We now consider different ranges of $\lambda > 1$.
\begin{itemize}
\item For $\lambda > \p/\q$, $(\lambda^{-1}\bp-\bq)H(f_2|V) + (\lambda^{-1} \p-\q)H(f_1|V) \leq 0$ for any fixed $p(x)$ with equality if $V=X$. Thus,
\begin{align*}
\max_{(R_1,R_2) \in \bar \Rr}(R_1+\lambda R_2) \le \max_{p(x)}\ (\lambda \q H(f_1)+\lambda \bq H(f_2)).
\end{align*}
Since $(0,C_2) = (0, \max_{p(x)} (\q H(f_1)+\bq H(f_2))) \in \Cr$. the inequality~\eqref{ineq1} holds.
\item For $1 < \lambda \leq \p/\q$, consider
\begin{align*}
(\lambda^{-1}\bp-\bq)H(f_2|V) + (\lambda^{-1}\p-\q)H(f_1|V)&=(\lambda^{-1}-1)H(f_2|V)+(\lambda^{-1}\p-\q)(H(f_1|f_2,V)-H(f_2|f_1,V))\nonumber\\
&\leq (\lambda^{-1}\p-\q)H(f_1|f_2)
\end{align*}
for any fixed $p(x)$ with equality if $V=f_2$. Thus,
\begin{align*}
\max_{(R_1,R_2) \in \bar{\Rr}}(R_1+ \lambda R_2) &\le \max_{p(x)}\big(\l p_2 H(f_1) + \l \bar{p}_2 H(f_2) + (p_1 - \l p_2) H(f_1|f_2)\big)\\
& = \max_{p(x)} (I(f_1;Y_1|S) - I(f_1;f_2) +\l I(f_2;Y_2|S))\\
& = \max_{p(x) \in \Pc} (I(f_1;Y_1|S) - I(f_1;f_2) +\l I(f_2;Y_2|S)).
\end{align*}
Finally since $(I(f_1;Y_1|S) - I(f_1;f_2),I(f_2;Y_2|S)) \in \Cr$ for $p(x) \in \Pc$, the inequality~\eqref{ineq1} holds.
\end{itemize}

The proof of the converse is completed using the following lemma.
\end{proof}

\begin{lemma}\label{subseteq}
\textnormal{~\cite{Eggleston1958} Let $\Rr \in \mathbb{R}^d$ be convex and $\Rr_1 \subseteq \Rr_2$ be two bounded convex subsets of $\Rr$, closed relative to $\Rr$. If every supporting hyperplane of $\Rr_2$ intersects $\Rr_1$, then $\Rr_1 = \Rr_2$.
}\end{lemma}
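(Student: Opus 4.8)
The plan is to prove the contrapositive by a standard separation argument: assuming $\Rr_1\subsetneq\Rr_2$, I will produce a supporting hyperplane of $\Rr_2$ that misses $\Rr_1$ completely, contradicting the hypothesis. So fix a point $p\in\Rr_2\setminus\Rr_1$; the goal becomes to separate $p$ from $\Rr_1$ by a hyperplane that can then be pushed out to support $\Rr_2$.

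The first step is to see that $p$ lies at strictly positive distance from $\Rr_1$. Since $\Rr_1$ is closed relative to $\Rr$ and $p\in\Rr_2\subseteq\Rr$, if $p$ were in the $\mathbb{R}^d$-closure $\overline{\Rr_1}$ it would lie in $\overline{\Rr_1}\cap\Rr=\Rr_1$, which is false; hence $p\notin\overline{\Rr_1}$ and $\delta:=\inf_{x\in\Rr_1}\|p-x\|>0$. This is the one place the relative-closedness hypothesis (and hence the ambient set $\Rr$) is actually used. The degenerate case $\Rr_1=\emptyset$ is disposed of separately, by noting that a nonempty bounded convex $\Rr_2$ has supporting hyperplanes meeting $\overline{\Rr_2}$, so the hypothesis already forces $\Rr_2=\emptyset$.

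Next, $\overline{\Rr_1}$ is a nonempty compact convex set (bounded and closed) not containing $p$, so the separating hyperplane theorem yields a nonzero $c\in\mathbb{R}^d$ and $\alpha\in\mathbb{R}$ with $\langle c,x\rangle\le\alpha$ for all $x\in\overline{\Rr_1}$ and $\langle c,p\rangle>\alpha$. Put $\beta:=\sup_{x\in\Rr_2}\langle c,x\rangle$, which is finite since $\Rr_2$ is bounded and satisfies $\beta\ge\langle c,p\rangle>\alpha$. Then $H:=\{x:\langle c,x\rangle=\beta\}$ contains $\Rr_2$ in the closed half-space $\{\langle c,\cdot\rangle\le\beta\}$ and meets $\overline{\Rr_2}$ (indeed meets $\Rr_2$ when $\Rr_2$ is closed, as in the application where all regions are compact), so $H$ is a supporting hyperplane of $\Rr_2$; but every $x\in\Rr_1$ has $\langle c,x\rangle\le\alpha<\beta$, so $H\cap\Rr_1=\emptyset$ — the desired contradiction. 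Hence $\Rr_1=\Rr_2$.

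The only genuinely delicate point, and the thing to be careful about, is the precise reading of "supporting hyperplane" in degenerate configurations. If $\Rr_2$ is not full-dimensional I would first pass to the affine hull of $\Rr_2$, so that the hyperplane $H$ constructed above is a true supporting hyperplane rather than one containing $\Rr_2$ (hyperplanes containing $\Rr_2$ automatically intersect $\Rr_1$ and are harmless); and if $\Rr_2$ is not $\mathbb{R}^d$-closed one must read "supporting hyperplane" as touching $\overline{\Rr_2}$. Neither issue arises when the lemma is applied with $\Rr_1=\Cr$ and $\Rr_2=\bar\Rr$, since there both sets are compact subsets of the nonnegative quadrant; everything else is the routine separation computation above.
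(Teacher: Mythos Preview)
The paper does not give its own proof of this lemma; it is simply cited from Eggleston's text on convexity and used as a black box to finish the converses of Theorems~\ref{cap:private} and~\ref{capacity:Marton:common} and Proposition~\ref{prop:Gaussian}. So there is nothing in the paper to compare your argument against.

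Your separation argument is correct and is the standard route. A couple of minor remarks: the pass to the affine hull is not actually needed, because once $\Rr_1\neq\emptyset$ your separating functional $\langle c,\cdot\rangle$ already takes a strictly smaller value on $\Rr_1$ than at $p$, so it is nonconstant on $\Rr_2$ and the hyperplane $H=\{\langle c,\cdot\rangle=\beta\}$ cannot contain $\Rr_2$; and your reading of ``supporting hyperplane'' as touching $\overline{\Rr_2}$ is the only sensible one here and is exactly what the applications in the paper use (all the regions involved, $\Cr$, $\bar\Rr$, $\Cr_\mathrm{G}$, are compact). With those caveats, your proof is complete.
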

\smallskip

As an example of a BC-TDCS, consider the following. 
\begin{example}[Blackwell channel with state~\cite{Kim--Chia--EG2015}]\label{ex:blackwell}
\textnormal{The functions $f_1$ and $f_2$ for this example are depicted in Figure~\ref{fig:bc-blackwell}.
\begin{figure}[h]
\begin{center}
\psfrag{0}[r]{0}
\psfrag{1}[r]{2}
\psfrag{2}[r]{1}
\psfrag{X}[c]{$X$}
\psfrag{Y1}[c]{$f_1(X)$}
\psfrag{Y2}[c]{$f_2(X)$}
\psfrag{z}[l]{0}
\psfrag{o}[l]{1}
\includegraphics[scale=0.46]{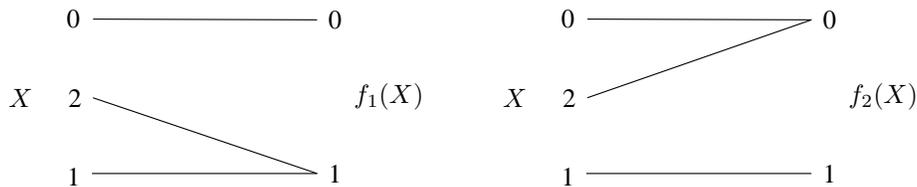}
\end{center}
\caption{The deterministic components of the Blackwell channel with state.} 
\label{fig:bc-blackwell}
\end{figure}}

\textnormal{The private message capacity region of the Blackwell channel with state known only to the receivers is the convex hull of
\begin{align*}
\Rr'_1 =\{(R_1,R_2)\colon\, &R_1 \leq H(\a_0) - \bp \bar{\a}_1 H(\a_0/\bar{\a}_1),\\
						&R_2 \leq H(\a_1) - \q\bar{\a}_0H(\a_1/\bar{\a}_0),\\
						&R_1+R_2 \leq H(\a_0) - \bp \bar{\a}_1 H(\a_0/\bar{\a}_1)+\bq\bar{\a}_0 H(\a_1/\bar{\a}_0)\\
						&\text{for some } \a_0,\a_1\geq 0, \a_0+\a_1 \leq 1\}.
\end{align*}
where $H(a)$, $a \in [0,1]$ is the binary entropy function. Note that $\Rr'_1$, defined in the proof of Theorem~\ref{cap:private}, is the Marton rate region with $(U_1,U_2) = (f_1,f_2)$ and $X \in \{0,1,2\}$ for $p_X(0)=\a_0, p_X(1)=\a_1, p_X(2)=1-\a_0-\a_1$ for  $ \a_0,\a_1\geq 0, \a_0+\a_1 \leq 1$. Also, since the rate pairs $(C_1,0)=(1,0) \in \Rr'_1$ and $(0,C_2)=(0,1) \in \Rr'_1$, $\Cr$ is the convex hull of $\Rr'_1$. The capacity region with state for $(p_1,p_2)=(0.5, 0.5),\, (0.7,0.3)$, and $(1,0)$ is plotted in Figure~\ref{fig2}. For $(p_1,p_2)=(0.5, 0.5)$, the two channels are statistically identical, hence the capacity region coincides with the time-division region.  For $(p_1,p_2)=(1,0)$, the channel reduces to the Blackwell channel with no state~\cite{Blackwell--Breiman--Thomasian1958}. For $(p_1,p_2)$ in between these two extreme cases, the capacity region is established by our theorem.}
\begin{figure}[h]
  \centering
\psfrag{a1}[c]{0}
\psfrag{a2}[c]{}
\psfrag{b}[c]{1}
\psfrag{c}[c]{1}
\psfrag{e}[c]{}
\psfrag{f}[c]{}
\psfrag{n}[b][b][0.9]{$\qquad(0.5,0.5)$}
\psfrag{k}[c][l][0.9]{$\qquad(0.7,0.3)$}
\psfrag{m}[r][l][0.9]{$\qquad(1,0)$}
\psfrag{g}[b]{$R_2$}
\psfrag{h}[l]{$R_1$}
    \includegraphics[scale=0.5]{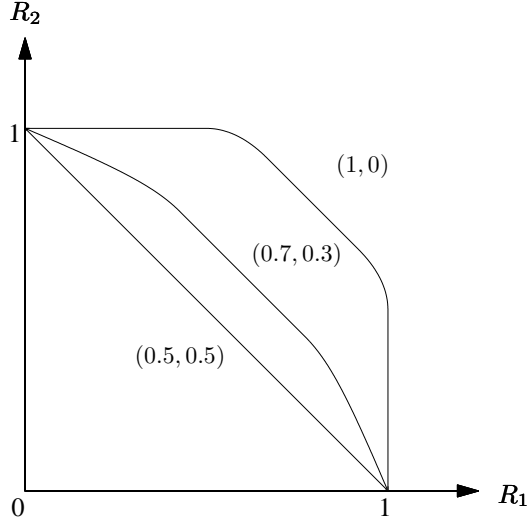}
      \caption{Capacity region of the Blackwell channel with the state. }
      \label{fig2}
 \end{figure}
 \end{example}

Next consider the following example which is motivated by deterministic approximations of wireless channels.

\begin{example}[Finite-field BC-TDCS] \label{ex:finite}
\textnormal{Consider the BC-TDCS  with the state known only at the receivers with $\mathbf{X}= \begin{bmatrix}X_1&X_2 \end{bmatrix}^T$:
\begin{align}\begin{split}\label{ex2}
Y_1 = \begin{cases}	h_{11}X_{1}+h_{12}X_2 &\text{ if } S_1=1,\\
				h_{21}X_{1}+h_{22}X_2 &\text{ if } S_1=2,
	\end{cases}\\
Y_2 = \begin{cases} h_{11}X_{1}+h_{12}X_2&\text{ if } S_2=1,\\
				h_{21}X_{1}+h_{22}X_2&\text{ if } S_2= 2,
	\end{cases}
\end{split}\end{align}
where the channel matrix is full-rank, $\Yc_1=\Yc_2=\Xc_1=\Xc_2=[0:K-1]$, and the arithmetic is over the finite field.}

\textnormal{To compute the private message capacity region, note that $C_1=\log K$ and  $C_2=\log K$. 
\textnormal{To evaluate $\Rr_1$, we compute $\Pc$. }
    \begin{align*}
\p H(f_1)+\bp I(f_1;f_2)+\lambda \bq H(f_2|f_1) &=\p H(f_1)+\bp H(f_2)+(\lambda \bq-\bp) H(f_2|f_1)\\
&\leq (\p+\lambda\bq)\log K
\end{align*}
for $\bp/\bq \leq \lambda \leq 1$ with equality if $\mathbf{X} \sim \U ([0:K-1]^2)$. Similarly,
\begin{align*}
\p H(f_1)+\bp I(f_1;f_2)+\lambda \bq H(f_2|f_1) &=\p H(f_1)+\bp H(f_2)+(\lambda \bq-\bp) H(f_2|f_1)\\
&\leq (\p+\lambda\bq)\log K 
\end{align*}
for $1 \leq \lambda \leq p_1/p_2$ with equality if $\mathbf{X} \sim \U ([0:K-1]^2)$. Thus, $\Pc = \left\{\U ( [0:K-1]^2) \right\}$.
Note that when $\mathbf{X}$ is uniform,  $H(f_{1})=H(f_{2})=H(f_{1}|f_{2})=H(f_{2}|f_{1})=\log K$. Hence,
    \begin{align*}
    \Rr_1 = \{(R_1,R_2)\colon &R_1 \leq \p\log K, R_2 \leq \bq \log K \},
    \end{align*}
and the capacity region is
\begin{align*}
\Cr =\co \{\Rr_1 \cup \{(\log K,0)\}\cup \{(0,\log K)\}\} = \co \{(0,0) \cup (\log K,0) \cup (0,\log K) \cup (\p \log K, \bq \log K)\}.
\end{align*}}

\textnormal{Figure~\ref{fig3} plots the capacity region for $(p_1,p_2)=(0.5,0.5), (0.7,0.4)$, and $(1,0)$. For $(p_1,p_2)=(0.5,0.5)$, the two channels are statistically identical and the capacity region coincides with the time-division region. For $(p_1,p_2)=(1,0)$, the capacity region is $\{(R_1,R_2)\colon R_1 \le \log K,\; R_2 \le \log K\}$ because the chahnnel matrix is full-rank. For $(p_1,p_2)$ in between these two extreme cases, the capacity region is established by our theorem.}
 \begin{figure}[h]
 \centering
 \psfrag{a1}[c]{0}
\psfrag{a2}[c]{}
\psfrag{b}[c]{1}
\psfrag{c}[r]{1}
\psfrag{e}[c]{0.7}
\psfrag{f}[r]{0.6}
\psfrag{n}[c][][0.9]{$\qquad(0.5,0.5)$}
\psfrag{k}[c][][0.9]{$\qquad(0.7,0.4)$}
\psfrag{m}[c][][0.9]{$\qquad(1,0)$}
\psfrag{g}[b]{$R_2/\log K$}
\psfrag{h}[l]{$R_1/\log K$}
 \includegraphics[scale=0.5]{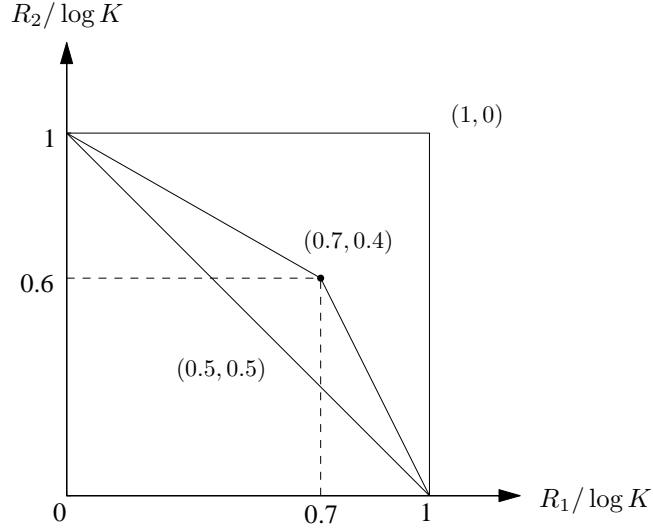}
 \caption{Capacity region of the Finite Field BC-TDCS.}\label{fig3}
 \end{figure}

\begin{remark}\it{Connection to wireless channels.} \textnormal{Consider the following fading broadcast channel
\begin{align}\label{fading}
Y_j=\mathbf{H}_j^{\dagger}\mathbf{X}+Z_j\text{ for } j=1,2,
\end{align}
where $\dagger$ denotes the conjugate-transpose, $\mathbf{X}=\begin{bmatrix}X_1&X_2 \end{bmatrix}^T\in \mathbb{C}^{2\times 1}, \E [\mathbf{X}^{\dagger}\mathbf{X}] \leq P$, $Z_j \sim \mathcal{C}\mathcal{N}(0,1)$ and the noise sequences $Z_{ji}$, $j=1,2$ and $i\in [1:n]$, are i.i.d. In addition, for $j=1,2$,
\begin{align*}
\mathbf{H}^{\dagger}_j =\begin{cases}
					[ h_{11} \quad h_{12}]& \text{if }S_j=1 \text{ w.p. } p_j,\\
					[h_{21}  \quad h_{22}] & \text{if }S_j=2 \text{ w.p. } \bar{p}_j,
					\end{cases}
\end{align*}%
where the channel matrix is in $\mathbb{C}^{2\times 2}$ and is full rank.}

\textnormal{We now show that the degrees of freedom (DoF) of this fading Gaussian broadcast channel, obtained by dividing the maximum sum-rate by $\log P$ and taking the limit, is $\p+\bq$.}

\textnormal{Since the variance of the noise $Z_j$ is bounded, the DoF of channel in~\eqref{fading} is equal to that of the BC-TDCS with $Y_j=\mathbf{H}_j^{\dagger}\mathbf{X}$ for $j=1,2$~\cite{Maddah-Ali--Tse2010}.}
\textnormal{We show that the DoF is achieved when $U_1=f_1$ and $U_2=f_2$ are independent and Gaussian with variances $\alpha P$ and $\beta P$ for some $\alpha,\beta>0$ such that
\begin{align*}
\begin{bmatrix}X_1\\X_2\end{bmatrix}=\begin{bmatrix}
    h_{11} & h_{12}\\
    h_{21} & h_{22}
    \end{bmatrix} ^{-1}\begin{bmatrix}U_1\\U_2\end{bmatrix}
\end{align*}
satisfy the power constraint. First note that for $(R_1,R_2) \in \Cr$,
\begin{align}
&\max \lim _{P \to \infty}\frac{R_1+R_2}{\log P}\nonumber\\
&=\max_{p(\mathbf{X})} \lim _{P \to \infty}\frac{\p H(f_1)+ \bq H(f_2)+(\bp-\bq)I(f_1;f_2)}{\log P}.\label{3terms}
\end{align}}

\textnormal{Now we show that each term in~\eqref{3terms} is maximized with the chosen input. First, $\lim _{P \to \infty}\p H(f_1)/\log P = \lim_{P \to \infty}\p \log(\alpha P)/\log P = p_1$. Now we show that $\p = \max \lim _{P \to \infty}\p H(f_1)/\log P$. Since  $\var(f_1)=\var(h_{11}X_1+h_{12}X_2) = |h_{11}|^2 \gamma P+|h_{12}|^2 \bar{\gamma} P+(h_{11}^*h_{12}+h_{12}^*h_{11})\rho \sqrt{\gamma \bar{\gamma}}P$ for some $0\leq \gamma,\rho \leq 1$ due to the power constraint, $H(f_1) \leq \log(|h_{11}|^2 \gamma+|h_{12}|^2 \bar{\gamma}+(h_{11}^*h_{12}+h_{12}^*h_{11})\rho\sqrt{\gamma \bar{\gamma}})+\log P$. Hence, $\lim _{P \to \infty}\p H(f_1)/\log P \leq \p$.
Similarly, $\lim _{P \to \infty}\bq H(f_2)/\log P$ is maximized and is equal to $\bq$, and $\lim _{P \to \infty}(\bp-\bq)I(f_1;f_2)/\log P$ is maximized and is equal to $0$. Thus, the following holds:
\begin{align*}
&\max_{p(\mathbf{X})} \lim _{P \to \infty}\frac{\p H(f_1)+ \bq H(f_2)+(\bp-\bq)I(f_1;f_2)}{\log P} \\
&\qquad= \p+\bq,
\end{align*}
and the DoF of the fading Gaussian BC in~\eqref{fading} is $\p+\bq$.}
\end{remark}
\end{example}

The capacity region result can be readily extended to the case with common message ($R_0 \neq 0$).
\begin{theorem}\label{capacity:Marton:common}\textnormal{The capacity region of a BC-TDCS $(\Xc \times \Sc, p(s)p(y_1,y_2|x,s), \Yc_1 \times \Yc_2)$ with the state known only at the receivers is the convex hull of the set of all rate pairs $(R_0,R_1,R_2)$ such that
\begin{align*}
R_0 &\leq \min\{I(U_0;Y_1|S), I(U_0;Y_2|S)\}\\
R_0+R_1 &\leq I(U_0;Y_1|S)+I(U_1;Y_1|U_0,S)\\
R_0+R_2 &\leq I(U_0;Y_2|S)+I(U_2;Y_2|U_0,S)\\
R_0+R_1+R_2 &\leq \min\{I(U_0;Y_1|S), I(U_0;Y_2|S)\}+I(U_1;Y_1|U_0,S)+I(U_2;Y_2|U_0,S)-I(U_1;U_2|U_0)\\
\end{align*}
for some $p(u_0,x)$ and either $(U_1,U_2)=(X,\emptyset), (U_1,U_2)=(f_1,f_2)$, or $(U_1,U_2)=(\emptyset,X)$.
}\end{theorem}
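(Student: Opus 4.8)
The strategy is to reduce the common-message case to the private-message case (Theorem~\ref{cap:private}) by the standard common-message-into-private-message arguments, handling achievability and converse separately. For \emph{achievability}, the rate region in the statement is exactly the Marton region with common message specialized to the three extreme choices $(U_1,U_2)\in\{(X,\emptyset),(f_1,f_2),(\emptyset,X)\}$, so it suffices to invoke the general Marton inner bound with a common rate split \cite{EG--Kim2011}: code $M_0$ into a cloud center $U_0$, then superpose the Marton construction for the two private messages, exactly as in the proof of Theorem~\ref{cap:private}. The three corner choices of $(U_1,U_2)$ are admissible there, so every such triple $(R_0,R_1,R_2)$ is achievable, and taking the convex hull (time-sharing) closes the region.

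For the \emph{converse}, I would first recall the \emph{UV} outer bound with common message for the DM-BC with state at the receivers: any achievable $(R_0,R_1,R_2)$ satisfies, for some $p(u_0,u,v,x)$ with $U_0\!\to\!(U,V)\!\to\!X$,
\begin{align*}
R_0 &\leq \min\{I(U_0;Y_1|S),\, I(U_0;Y_2|S)\},\\
R_0+R_1 &\leq I(U_0;Y_1|S)+I(U;Y_1|U_0,S),\\
R_0+R_2 &\leq I(U_0;Y_2|S)+I(V;Y_2|U_0,S),\\
R_0+R_1+R_2 &\leq \min\{I(U_0;Y_1|S),I(U_0;Y_2|S)\}+I(U;Y_1|U_0,S)+I(X;Y_2|U,U_0,S),\\
R_0+R_1+R_2 &\leq \min\{I(U_0;Y_1|S),I(U_0;Y_2|S)\}+I(V;Y_2|U_0,S)+I(X;Y_1|V,U_0,S).
\end{align*}
The key observation is that for fixed $U_0=u_0$, the conditional channel $p(y_1,y_2,s|x,U_0=u_0)$ is again a BC-TDCS (the deterministic structure $f_1,f_2$ and the state statistics do not depend on $u_0$), so the private-rate portion of each constraint is controlled by the already-established bounds from Theorem~\ref{cap:private} applied conditionally on $U_0$. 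Concretely, I would support-function the outer bound: for $\mu_0,\lambda_1,\lambda_2\ge 0$ bound $\max(\mu_0 R_0+\lambda_1 R_1+\lambda_2 R_2)$ over the outer bound, condition on $U_0$, and inside the conditional use precisely the entropy manipulations of the proof of Theorem~\ref{cap:private} (the splittings $(\lambda\bq-\bp)H(f_2|U)+(\lambda\q-\p)H(f_1|U)\le(\lambda\bq-\bp)H(f_2|f_1)$ with equality at $U=f_1$, and its mirror with $U=f_2$) to show the maximizing $U,V$ are among $X,f_1,f_2,\emptyset$ conditioned on each $u_0$; averaging over $U_0$ and invoking concavity/convexity of the relevant functionals then matches the claimed region's support function, after which Lemma~\ref{subseteq} closes the gap between the inner and outer bounds.

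The main obstacle is the $\min\{I(U_0;Y_1|S),I(U_0;Y_2|S)\}$ term: unlike the pure private case, the common-rate constraints are not linear functionals of $(R_1,R_2)$ because of the minimum, so the support-function argument must be organized carefully — either by splitting into the two cases according to which of $I(U_0;Y_1|S)$, $I(U_0;Y_2|S)$ is smaller and arguing on each piece, or by the standard trick of introducing a time-sharing variable to convexify before comparing support functions. One must also verify that the conditional-on-$U_0$ reduction genuinely preserves the BC-TDCS structure, i.e.\ that $f_1,f_2$ remain deterministic functions of $X$ and the $S$-marginals are unchanged, so that the inequalities of Theorem~\ref{cap:private}'s proof apply verbatim inside the conditioning; this is immediate from the model \eqref{model} but should be stated. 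The remaining steps — cardinality bounds on $U_0$, closure and convexity of the region — are routine and follow as in \cite{EG--Kim2011}.
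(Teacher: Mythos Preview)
Your proposal is correct and follows essentially the same approach as the paper: both use the \emph{UVW} outer bound, a supporting-hyperplane comparison via Lemma~\ref{subseteq}, and the same conditional-on-$U_0$ entropy inequalities from the proof of Theorem~\ref{cap:private}. The paper's organization is slightly cleaner than what you sketch for the $\min$ term: rather than splitting on which of $I(U_0;Y_1|S),I(U_0;Y_2|S)$ is smaller, it splits on the ordering of $(\lambda_0,\lambda_1,\lambda_2)$ --- when $\lambda_0\le\max(\lambda_1,\lambda_2)$ it absorbs $R_0$ into the larger private rate and invokes Theorem~\ref{cap:private} directly, and only when $\lambda_0$ is largest does it redo the conditional entropy manipulations with $U_0$ carried through.
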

The proof is in Appendix~\ref{cap:common TDCS}.
\section{Ordered channel state components}\label{sec:Ordered}
Recall the definitions of the following classes of BC for which superposition coding was shown to be optimal.

\begin{definition}[Degraded BC~\cite{Cover1972}]\label{def1}\textnormal{For a DM-BC $(\Xc, p(\yt_1,\yt_2|x), \Ytcal_1 \times \Ytcal_2)$ receiver $\Yt_2$ is said to be a degraded version of $\Yt_1$ if there exists $Z$ such that $Z|\{X=x\} \sim p_{\Yt_1|X}(z|x),$ i.e., $Z$ has the same conditional pmf as $\Yt_1$ (given $X$), and $X \to Z \to \Yt_2$ form a Markov chain.  
}\end{definition} 

\begin{definition}[Less noisy BC~\cite{Korner--Marton1975}]\textnormal{For a DM-BC $(\Xcal, p(\yt_1,\yt_2|x), \Ytcal_1 \times \Ytcal_2)$ receiver $\Yt_1$ is said to be less noisy than $\Yt_2$ $I(U;\Yt_1) \geq I(U;\Yt_2)$ for all $p(u,x)$.
}\end{definition}
Van-Dijk~\cite{van-Dijk1997} showed that receiver $\Yt_1$ is less noisy than receiver $\Yt_2$ if $I(X;\Yt_1) - I(X;\Yt_2)$ is concave in $p(x)$, or equivalently, $I(X;\Yt_1)-I(X;\Yt_2)$ is equal to its upper concave envelope $\CC[I(X;\Yt_1) - I(X;\Yt_2)]$ (the smallest concave function that is greater than or equal to $I(X;\Yt_1)-I(X;\Yt_2)$). 

\begin{definition}[More capable BC~\cite{Korner--Marton1975}]\textnormal{For a DM-BC $(\Xcal, p(\yt_1,\yt_2|x), \Ytcal_1 \times \Ytcal_2)$ receiver $\Yt_1$ is said to be more capable than $\Yt_2$ if $I(X;\Yt_1) \geq I(X;\Yt_2)$ for all $p(x)$. 
}\end{definition}
The more capable condition can also be recast in terms of the concave envelope: Receiver $\Yt_1$ is more capable than $\Yt_2$ if $\CC[I(X;\Yt_2)-I(X;\Yt_1)]=0$ for every $p(x)$. 

\begin{definition}[Dominantly c-symmetric BC~\cite{Nair2010}]\label{def4} \textnormal{A DMC with input alphabet $\Xc = \{0,1,\ldots,m-1\}$ and output alphabet $\Yc$ of size $n$ is said to be {\em c-symmetric} if, for each $j = 0,\ldots,m-1,$ there is a permutation $\pi_j(\cdot)$ of $\Yc$ such that $p_{Y|X}(\pi_j(y)|(i+j)_m) = p_{Y|X}(y|i)$ for all $i$, where $(i+j)_m = (i+j) \text{ mod }m$. A DM-BC $(\Xcal, p(\yt_1,\yt_2|x), \Ytcal_1 \times \Ytcal_2)$ is said to be c-symmetric if both channel components $X$ to $\Yt_1$ and $X$ to $\Yt_2$ are c-symmetric. A c-symmetric DM-BC is said to be {\em dominantly c-symmetric} if
\begin{align}\label{condition:dominant}
I(X;\Yt_1)_p - I(X;\Yt_2)_p \leq I(X;\Yt_1)_u - I(X;\Yt_2)_u
\end{align}
for every $p(x)$, where $u(x)$ is the uniform pmf and $I(X;\Yt_1)_p$ denotes the mutual information between $X$ and $\Yt_1$ for $X \sim p(x)$.
}\end{definition}


In the following we show the surprising fact that if the DM-BC satisfies any of the above definitions, then the corresponding BC-TCS with the state known at the receivers also satisfies the same condition. Hence, the capacity regions for these corresponding BC-TCS are achieved using superposition coding.

\begin{theorem}\label{thm:dominance}
\textnormal{The DM-BC $(\Xcal, p(y_1,y_2,s|x), (\Ycal_1,\Scal) \times (\Ycal_2,\Scal))$ with state known only at the receivers is\\ 
(i) degraded if the DM-BC $p(\yt_1,\yt_2|x)$ is degraded,\\
(ii) less noisy if the DM-BC $p(\yt_1,\yt_2|x)$ is less noisy,\\
(iii) more capable if the DM-BC $p(\yt_1,\yt_2|x)$ is more capable, \\
(iv) dominantly c-symmetric if the DM-BC $p(\yt_1,\yt_2|x)$ is dominantly c-symmetric.
}\end{theorem}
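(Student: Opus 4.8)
The plan is to prove each of the four parts by a direct reduction to the corresponding property of the underlying two-component DM-BC $p(\yt_1,\yt_2|x)$, exploiting the fact that, conditioned on the state pair $(S_1,S_2)$, the channel $X \to (Y_1,Y_2)$ is a product of two copies of the $(\yt_1,\yt_2)$-channel (or of a single component). The key structural observation is that $S=(S_1,S_2)$ is independent of $X$, so for any $p(u,x)$ we may write $I(U;Y_j|S) = \sum_{s} p_S(s)\, I(U;Y_j|S=s)$, and for each fixed $s$ the output $Y_j$ is distributed either as $\Yt_1$ or as $\Yt_2$ depending on the relevant coordinate $s_j$. Thus every mutual-information quantity appearing in the definitions of Section~\ref{sec:Ordered} decomposes as a convex combination (with weights $p_1\bar p_2$, $p_1 p_2$, etc.) of the analogous quantities for the components $\Yt_1$ and $\Yt_2$.

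For part (i), I would construct the degrading random variable explicitly. If $X\to Z\to\Yt_2$ with $Z\sim p_{\Yt_1|X}$ realizes the degradedness of the DM-BC, then given the state I define $Z' = (Z^{(1)}, S)$ where $Z^{(1)}$ is an independent copy of $Z$ and, crucially, I route outputs according to $S_1$: set the ``$Y_1$-like'' output to be $\Yt_1$ when $S_1=1$ and $\Yt_2$ when $S_1=2$, and then pass through the degrading kernel coordinate-wise only on the components where $S_2$ disagrees appropriately — concretely, one shows $(Y_1,S)$ can be degraded to $(Y_2,S)$ by applying the single-letter degrading channel on the coordinate(s) where the state selects $\Yt_1$ for receiver~1 but $\Yt_2$ for receiver~2, and doing nothing where both receivers see the same component. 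Since $S$ is common to both and independent of $X$, the Markov chain $X \to (Y_1,S) \to (Y_2,S)$ follows. For parts (ii) and (iii), I would argue directly from the characterizations in the excerpt: less-noisy says $I(U;\Yt_1)\ge I(U;\Yt_2)$ for all $p(u,x)$, so for each state realization $s$ we get $I(U;Y_1|S=s)\ge I(U;Y_2|S=s)$ provided $Y_1$ sees $\Yt_1$ and $Y_2$ sees $\Yt_2$ — but the subtle point is the states where receiver~1 sees the \emph{weak} component $\Yt_2$ and receiver~2 sees the \emph{strong} component $\Yt_1$; on those states the inequality would go the wrong way. The resolution, and the heart of the argument, is that $p_1\ge p_2$ forces a favorable coupling: the ``bad'' states occur with probability $\bar p_1 p_2$ while the ``good'' states $(\Yt_1 \text{ to } Y_1, \Yt_2 \text{ to } Y_2)$ occur with probability $p_1 \bar p_2 \ge \bar p_1 p_2$, and on the states where both receivers see the same component the contributions cancel identically; so after summing, $I(U;Y_1|S) - I(U;Y_2|S) = (p_1\bar p_2 - \bar p_1 p_2)\big(I(U;\Yt_1) - I(U;\Yt_2)\big) = (p_1-p_2)\big(I(U;\Yt_1)-I(U;\Yt_2)\big)\ge 0$. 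The same cancellation-and-reweighting identity, applied with $X$ in place of $U$, handles more-capable in part (iii), and applied via the concave-envelope reformulations of Van-Dijk handles less-noisy too (concavity in $p(x)$ is preserved under the affine reweighting combined with the marginalization over $S$).

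For part (iv), dominant c-symmetry, I would first check that c-symmetry of each component $X\to\Yt_i$ implies c-symmetry of $X\to(Y_i,S)$: the permutations $\pi_j$ act on the $\Yt_i$-part and extend to $(Y_i,S)$ by acting trivially on the $S$-coordinate, which works because $S$ is independent of $X$. Then the dominance condition~\eqref{condition:dominant} for the BC-TCS reads $I(X;Y_1|S)_p - I(X;Y_2|S)_p \le I(X;Y_1|S)_u - I(X;Y_2|S)_u$; using the same decomposition as above, the left side equals $(p_1-p_2)\big(I(X;\Yt_1)_p - I(X;\Yt_2)_p\big)$ plus state-symmetric terms that are identical on both sides, and likewise for the right side with $u$, so the desired inequality reduces (after cancelling the common terms and dividing by the nonnegative factor $p_1-p_2$) exactly to the dominant c-symmetry of the underlying DM-BC. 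I expect the main obstacle to be part (i): exhibiting the degrading channel cleanly requires care in the bookkeeping of which state coordinates need a degrading step and which do not, and in verifying the Markov chain holds as a genuine channel (not merely an inequality of mutual informations) — parts (ii)–(iv) are then essentially the single identity $I(\,\cdot\,;Y_1|S) - I(\,\cdot\,;Y_2|S) = (p_1-p_2)\big(I(\,\cdot\,;\Yt_1) - I(\,\cdot\,;\Yt_2)\big)$ combined with the appropriate known characterization.
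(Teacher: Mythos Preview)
Your treatment of parts (ii)--(iv) is correct and essentially identical to the paper's: both reduce to the single identity
\[
I(\,\cdot\,;Y_1|S) - I(\,\cdot\,;Y_2|S) \;=\; (p_1-p_2)\bigl(I(\,\cdot\,;\Yt_1) - I(\,\cdot\,;\Yt_2)\bigr),
\]
and then invoke the relevant definition. (The paper obtains this directly from the marginals $I(U;Y_j|S)=p_j I(U;\Yt_1)+\bar p_j I(U;\Yt_2)$ rather than through your joint-state cancellation, but the two computations coincide since $p_1\bar p_2-\bar p_1 p_2=p_1-p_2$.) For~(iv) your extension of the permutations to act trivially on the $S$-coordinate is exactly what the paper does.

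Part~(i), however, has a genuine gap. Your degrading recipe covers the states $(S_1,S_2)\in\{(1,1),(2,2),(1,2)\}$ but says nothing about $(S_1,S_2)=(2,1)$, where receiver~1 sees $\Yt_2$ and receiver~2 sees $\Yt_1$. On that event there is no single-letter degrading kernel from $\Yt_2$ to $\Yt_1$, so the channel $X\to(Y_1,S)\to(Y_2,S)$ you describe does not exist if the joint $p(s_1,s_2)$ places mass on $(2,1)$. The averaging argument you use for (ii)--(iii) does not rescue this: degradedness demands an actual Markov chain, as you yourself note, and an inequality of weighted mutual informations does not produce one.

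The missing idea --- and this is where $p_1\ge p_2$ enters for part~(i) --- is that stochastic degradedness only requires \emph{some} $Z$ with $Z|\{X=x\}\sim p_{Y_1,S|X}(\cdot|x)$ and $X\to Z\to(Y_2,S)$; you are free to re-couple the states. Because $\bar p_1\le\bar p_2$, one can construct a copy $(S_1',S_2')$ with the correct marginal but jointly coupled to $S_2$ so that $\{S_1'=2\}\subseteq\{S_2=2\}$, i.e.\ $\P\{S_1'=2,\,S_2=1\}=0$. With this coupling the problematic state never occurs, and your ``degrade on $(1,2)$, do nothing otherwise'' construction goes through. The paper carries this out explicitly by specifying $p_{S_1'|S_1,S_2}$ and then setting $Y_1'=Y_2$ when $(S_1',S_2)\in\{(1,1),(2,2)\}$ and $Y_1'=Z$ (the single-letter intermediate) when $(S_1',S_2)=(1,2)$.
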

\begin{proof}
We prove (i). The proof of the rest of this theorem is in Appendix~\ref{proof:thm:dominance}.
\begin{enumerate}
\item[(i)] For a degraded DM-BC $p(\yt_1,\yt_2|x)$, there exists $Z$ such that $Z|\{X=x\} \sim p_{\tilde{Y}_1|X}(z|x)$ and $X \rightarrow Z \rightarrow \tilde{Y}_2$ forms a Markov chain. We show that there exists $(Y_1',S_1',S_2')$ such that $(Y_1',S_1',S_2')|\{X=x\} \sim p_{Y_1,S_1,S_2|X}(y_1',s_1',s_2'|x)$ and $X \rightarrow (Y_1',S_1',S_2') \rightarrow (Y_2,S_1,S_2)$ forms a Markov chain.


Let $(S_1',S_2')$ be distributed according to
\begin{align*}
p_{S'_1|S_1,S_2}(s_1'|s_1,s_2) = \begin{cases}
q_1 &\text{ if } (s_1',s_2)=(1,1),\\
p_1 - q_1 &\text{ if } (s_1',s_2) = (1,2),\\
1-p_1 &\text{ if } (s_1',s_2) = (2,2),
\end{cases} 
\end{align*}
and $p_{S'_2|S'_1,S_1,S_2}(s'_2|s'_1, s_1, s_2) = p_{S_2|S_1}(s'_2|s'_1)$. Thus $(S_1',S_2')|\{X = x\} \sim p_{S_1,S_2|X}(s_1',s_2'|x)$ and  
\begin{align}\label{markov1}
p(s_1,s_2|y_1',s_1',s_2',x) = p(s_1,s_2|y_1',s_1',s_2').
\end{align}

Let $Y'_1$ be distributed according to
\begin{align*}
Y'_1 =\begin{cases}
Y_2 &\text{ if } (S'_1,S_2)=(1,1) \text{ or } (2,2),\\
Z &\text{ if } (S'_1,S_2)=(1,2)\end{cases} 
\end{align*}
where $X \rightarrow Z \rightarrow \Yt_2$.

Then $Y_1'|\{(S_1',S_2',X)=(s_1',s_2',x)\} \sim p_{Y_1|S_1,S_2,X}(y_1'|s_1',s_2',x)$ and 
\begin{align}\label{markov2}
p(y_2,|s_1,s_2,y_1',s_1',s_2',x) = p(y_2|s_1,s_2,y_1',s_1',s_2').
\end{align}

By~\eqref{markov1} and~\eqref{markov2}, It follows that $(Y_1',S_1',S_2')|\{X=x\} \sim p_{Y_1,S_1,S_2|X}(y_1',s_1',s_2'|x)$ and $X \rightarrow (Y_1',S_1',S_2') \rightarrow (Y_2,S_1,S_2)$ forms a Markov chain.
\end{enumerate}

\end{proof}

\begin{remark}\textnormal{The DM-BC $(\Xcal, p(y_1,y_2,s|x), (\Ycal_1,\Scal) \times (\Ycal_2,\Scal))$ is degraded, less noisy, or more capable \emph{if and only if} the DM-BC $p(\yt_1,\yt_2|x)$ is degraded, less noisy, or more capable, respectively (assuming $p_1 > p_2$).  If the DM-BC $(\Xcal, p(y_1,y_2,s|x), (\Ycal_1,\Scal) \times (\Ycal_2,\Scal))$ is degraded, there exists $(Y_1',S_1',S_2')$ such that $(Y_1',S_1',S_2')|\{X=x\} \sim p_{Y_1,S_1,S_2|X}(y_1',s_1',s_2'|x)$ and $X \rightarrow (Y_1',S_1',S_2') \rightarrow (Y_2,S_1,S_2)$ forms a Markov chain. Let $Z$ be distributed according to $p_{Z|X}(z|x) = p_{Y_1'|S_1',X}(z|1,x)$ and $p_{\Yt_2|Z,X}(\yt_2|z,x) = p_{Y_2|S_2,S_1',Y_1',X}(\yt_2|2,1,z,x)$. Then $Z|\{X=x\} \sim p_{\Yt_1|X}(z|x)$ and $\Yt_2|\{X=x\} \sim p_{\Yt_2|X}(\yt_2|x)$. Also $X \rightarrow Z \rightarrow \Yt_2$ because $p_{Y_2|S_2,S_1',Y_1',X}(\yt_2|2,1,z,x) = p_{Y_2|S_2,S_1',Y_1'}(\yt_2|2,1,z)$. Therefore the DM-BC $p(\yt_1,\yt_2|x)$ is degraded.
The proofs for less noisy and more capable DM-BC follow directly from the proof of part (ii) and (iii) of Theorem~\ref{thm:dominance}. We do not know however if the DM-BC $(\Xcal, p(y_1,y_2,s|x), (\Ycal_1,\Scal) \times (\Ycal_2,\Scal))$ is  dominantly c-symmetric if and only if the DM-BC $p(\yt_1,\yt_2|x)$ is dominantly c-symmetric.
}\end{remark}
It follows from Theorem~\ref{thm:dominance} that the capacity region of the BC-TCS satisfying the conditions in Theorem~\ref{thm:dominance} is the set of rate pairs $(R_1,R_2)$ such that
\begin{align}\begin{split}\label{sc region}
R_1 &\leq I(X;Y_1|U,S),\\
R_2 &\leq I(U;Y_2|S),\\	
R_1+R_2 &\leq I(X;Y_1|S)
\end{split}\end{align}
for some $p(u,x)$. 

\begin{remark}\label{rmk: superposition}
\textnormal{
Using superposition coding, receiver $\Yt_1$ can recover receiver $\Yt_2$'s message. Hence when there is common message ($R_0 \neq 0$), the capacity region is obtained by replacing $R_1$ with $R_0+R_1$. 
}\end{remark}

 As an example of BC-TCS with more capable or dominantly c-symmetric components, consider the following.
\begin{example}[A BC-TCS with a BSC and a BEC channel components]\label{ex:bscbec} \textnormal{A BC-TCS with a BSC and a BEC channel components has input $\Xcal = \{0,1\}$ and channel components BSC($p$) and BEC($e$). Without loss of generality, we assume $0 \leq p \leq 1/2$ and $0 \leq e \leq 1$. In~\cite{Nair2010}, it is shown that for the DM-BC $p(\yt_1,\yt_2|x)$, 
\begin{enumerate}
\item $\Yt_1$ is a degraded version of $\Yt_2$ if and only if $0 \leq e \leq 2p$.
\item $\Yt_2$ is less noisy than $\Yt_1$ if and only if $0 \leq e \leq 4p(1-p)$.
\item $\Yt_2$ is more capable than $\Yt_1$ if and only if $0\leq e \leq H(p)$.
\item $\Yt_1$ is dominantly c-symmetric if $H(p) \leq e \leq 1$.
\end{enumerate}
Hence, by Theorem~\ref{thm:dominance}, the corresponding BC-TCS with BSC($p$) and BEC($e$) channel components is degraded, less noisy, more capable, or dominantly c-symmetric for the above channel parameter ranges.
}\end{example}

\subsection{A product of reversely more capable channel components}
Another class of broadcast channel for which superposition coding is shown to be optimal for each component is the product of reversely more capable broadcast channels~\cite{Geng--Gohari--Nair--Yu2014}.

\begin{definition}[Product of reversely more capable BCs]\textnormal{A DM-BC $(\Xcal, p(\yt_1,\yt_2|x),\Ytcal_1 \times \Ytcal_2)$ is said to be a product of reversely more capable DM-BC if $\Xcal = (\Xcal_1, \Xcal_2)$, $\Ytcal_1 = (\Ytcal_{11}, \Ytcal_{12})$, $\Ytcal_2 = (\Ytcal_{21}, \Ytcal_{22})$, and $p(\yt_{11},\yt_{12},\yt_{21},\yt_{22}|x_1,x_2) = p(\yt_{11},\yt_{12}|x_1)p(\yt_{21},\yt_{22}|x_2)$, and $I(X_1; \Yt_{11}) \geq I(X_1; \Yt_{21})$ for all $p(x_1)$ and $I(X_2;\Yt_{12}) \leq I(X_2;\Yt_{22})$ for all $p(x_2)$.
}\end{definition}
We extend this definition to the broadcast channel with two channel state components as follows.

\begin{definition}[A product BC-TCS]\label{def:product}\textnormal{A 2-receiver {\em product broadcast channel with two channel state components} is a DM-BC with random sate $(\Xc \times \Sc, p(s)p(y_1,y_2|x,s), \Yc_1 \times \Yc_2)$, where $X=[X_1, X_2]$ and $S=(S_1,S_2)$ for 
\begin{align*}
Y_j&=[Y_{j1}, Y_{j2}],\\
S_j &= [S_{j1},S_{j2}],\\
Y_{j1} &= \begin{cases}	\Yt_{11} &\text{ if } S_{j1}=1,\\
				\Yt_{21} &\text{ if } S_{j1}=2
	\end{cases}\\
Y_{j2} &= \begin{cases}	\Yt_{12} &\text{ if } S_{j2}=1,\\
				\Yt_{22} &\text{ if } S_{j2}=2 
	\end{cases}
\end{align*}
for $j = 1,2$ and $p(\yt_{11},\yt_{12}|x_1)p(\yt_{21},\yt_{22}|x_2)$.
}\end{definition}

Let $p_{S_{j1}}(1) = p_{j1}$ and $p_{S_{j2}}(1) = p_{j2}$ for $j = 1,2$. Without loss of generality, we assume $p_{11} \ge p_{12}$ and $p_{21} \ge p_{22}$. In the following we establish the capacity region of BC-TCS with reversely more capable components.
\begin{theorem}\label{parallel}\textnormal{
A 2-receiver product BC-TCS $(\Xc \times \Sc, p(s)p(y_1,y_2|x,s), \Ycal_1 \times \Ycal_2)$ is more capable if the product DM-BC $(\Xcal,p(\yt_1,\yt_2|x), \Ytcal_1 \times \Ytcal_2)$ for $\Xcal = (\Xcal_1,\Xcal_2)$, $\Ytcal_1 = (\Ytcal_{11},\Ytcal_{12})$, $\Ytcal_2 = (\Ytcal_{21},\Ytcal_{22})$ is reversely more capable.
}\end{theorem}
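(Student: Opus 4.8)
The goal is to show that the product BC-TCS $(\Xc \times \Sc, p(s)p(y_1,y_2|x,s), \Ycal_1 \times \Ycal_2)$ is more capable, i.e. $I(X;Y_1|S) \ge I(X;Y_2|S)$ for all $p(x)$. The plan is to reduce the product statement to the single-component statement already available from Theorem \ref{thm:dominance}(iii). The first step is to decompose each side using the product structure and the conditioning on the (independent of $X$) state. Writing $X = (X_1,X_2)$ and noting that $Y_{j1}$ depends only on $X_1$ and $S_{j1}$ and $Y_{j2}$ depends only on $X_2$ and $S_{j2}$, I would argue that conditioned on $S$, the pair $(Y_{j1},Y_{j2})$ factors, so that
\begin{align*}
I(X;Y_j|S) = I(X_1,X_2;Y_{j1},Y_{j2}|S_{j1},S_{j2}) = I(X_1;Y_{j1}|S_{j1}) + I(X_2;Y_{j2}|S_{j2})
\end{align*}
whenever $X_1$ and $X_2$ are independent; for a general (possibly dependent) $p(x_1,x_2)$ one gets the upper bound $I(X;Y_j|S) \le I(X_1;Y_{j1}|S_{j1}) + I(X_2;Y_{j2}|S_{j2})$ for receiver $j=2$, and since we want to \emph{lower}-bound the $Y_1$ side and we are free to also use $I(X;Y_1|S)\ge I(X_1;Y_{11}|S_{11}) + I(X_2;Y_{12}|S_{12})$ is false in general — so the cleaner route is to prove the inequality holds for every product input and then invoke concavity/convexity or simply observe that more-capability is a statement that can be checked via the appropriate concave-envelope condition $\CC[I(X;Y_2|S) - I(X;Y_1|S)] = 0$, which localizes the check.

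Concretely, the second step is to handle each coordinate. For the first coordinate, $X_1 \to Y_{11}$ (given $S_{11}$) is exactly the BC-TCS built from the single component pair $(\Yt_{11},\Yt_{21})$ with $p_{S_{11}}(1)=p_{11}$; since the product DM-BC is reversely more capable, $\Yt_{11}$ is more capable than $\Yt_{21}$, so by Theorem \ref{thm:dominance}(iii) the corresponding BC-TCS is more capable, giving $I(X_1;Y_{11}|S_{11}) \ge I(X_1;Y_{21}|S_{11})$ for all $p(x_1)$ — and since $S_{11}$ and $S_{21}$ have the same marginal only if $p_{11}=p_{21}$, I need to be careful: actually the $S$-conditioning for receiver $1$ uses $S_{11}$ and for receiver $2$ uses $S_{21}$, which may have different success probabilities. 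So the correct single-coordinate claim is: for the component pair $(\Yt_{11},\Yt_{21})$ with $\Yt_{11}$ more capable than $\Yt_{21}$, the channel $X_1 \to (Y_{11},S_{11})$ is more capable than $X_1 \to (Y_{21},S_{21})$ for \emph{any} erasure-like mixing probabilities $p_{11}, p_{21}$ — this follows because $I(X_1;Y_{j1}|S_{j1}) = p_{j1} I(X_1;\Yt_{11}) + \bar p_{j1} I(X_1;\Yt_{21})$, and $I(X_1;\Yt_{11}) \ge I(X_1;\Yt_{21})$ pointwise makes this increasing in $p_{j1}$; but we need $p_{11}$ vs $p_{21}$ ordered the right way, which is \emph{not} assumed. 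For the second coordinate the roles reverse: $\Yt_{22}$ is more capable than $\Yt_{12}$, so $I(X_2;Y_{j2}|S_{j2})$ is decreasing in $p_{j2}$.

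The key realization, and the main obstacle, is therefore reconciling the fact that receiver $1$ gets the ``good'' first component with probability $p_{11}$ while receiver $2$ gets it with probability $p_{21}$, with no assumed ordering between $p_{11}$ and $p_{21}$. I expect the resolution is that the assumption $p_{11} \ge p_{12}$ and $p_{21} \ge p_{22}$ (each receiver's state components are individually ordered), combined with the product/reversely-more-capable structure, is exactly what makes the sum $I(X_1;Y_{11}|S_{11}) + I(X_2;Y_{12}|S_{12})$ dominate $I(X_1;Y_{21}|S_{21}) + I(X_2;Y_{22}|S_{22})$ termwise after regrouping — i.e. one pairs the $p_{11}$-term against the $p_{12}$-term for the \emph{same} receiver rather than across receivers. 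So the real final step is: write $I(X;Y_1|S) - I(X;Y_2|S)$, expand all four mutual informations as convex combinations of the four single-component mutual informations $I(X_1;\Yt_{11}), I(X_1;\Yt_{21}), I(X_2;\Yt_{12}), I(X_2;\Yt_{22})$, and verify that the resulting linear combination is nonnegative using $I(X_1;\Yt_{11}) \ge I(X_1;\Yt_{21})$, $I(X_2;\Yt_{22}) \ge I(X_2;\Yt_{12})$, and $p_{11}\ge p_{12}$, $p_{21}\ge p_{22}$ — and finally note this holds for product inputs, then lift to all $p(x)$ via the concave-envelope characterization of more-capability (or directly, since the difference need only be checked at each $p(x)$ and product inputs are not enough — this is the subtle point I would need to nail down, possibly by showing the worst case for the difference is attained at a product input, or by a direct data-processing argument constructing the degradation-like coupling as in the proof of part (i)).
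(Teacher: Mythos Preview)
Your proposal has a genuine gap, but it stems from the theorem statement being imprecise rather than from your reasoning being wrong. You are trying to prove that the BC-TCS is \emph{more capable} in the standard sense of Definition~3, i.e., $I(X;Y_1|S)\ge I(X;Y_2|S)$ for every joint $p(x_1,x_2)$. That claim is in general \emph{false} for a product of reversely more capable channels (take any example where receiver~1 dominates on coordinate~1 and receiver~2 dominates on coordinate~2; by choosing $X_1$ degenerate and $X_2$ nondegenerate you get $I(X;Y_1|S)<I(X;Y_2|S)$). The obstacles you keep running into --- product inputs not sufficing, and no assumed ordering between the two receivers' state probabilities on a single coordinate --- are exactly the reasons the literal more-capable claim cannot be salvaged.

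What the paper actually proves (and what the subsequent appeal to the Geng--Gohari--Nair--Yu capacity result requires) is that the BC-TCS is itself a \emph{product of reversely more capable} channels. This is a coordinate-wise statement: one shows $I(X_1;Y_{11},S_{11})\ge I(X_1;Y_{21},S_{21})$ for every $p(x_1)$ and $I(X_2;Y_{12},S_{12})\le I(X_2;Y_{22},S_{22})$ for every $p(x_2)$. Each of these is the one-line convex-combination computation you already wrote down,
\[
I(X_1;Y_{j1},S_{j1}) \;=\; p_{j1}\,I(X_1;\Yt_{11}) + \bar p_{j1}\,I(X_1;\Yt_{21}),
\]
combined with $I(X_1;\Yt_{11})\ge I(X_1;\Yt_{21})$ and the assumed ordering of the state probabilities (the paper's ``without loss of generality'' assumption is precisely that receiver~1's state on coordinate~1 is more likely to select the stronger component than receiver~2's, and vice versa on coordinate~2). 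There is no need to handle joint inputs $p(x_1,x_2)$, no concave-envelope argument, and no coupling construction: the definition of ``product of reversely more capable'' only quantifies over marginals $p(x_1)$ and $p(x_2)$ separately. Once you aim for the right target, the proof is two lines.
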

\begin{proof}
We show that the product DM-BC $(\Xcal,p(y_1,y_2|x), \Ycal_1 \times \Ycal_2)$ for $\Xcal = (\Xcal_1,\Xcal_2)$, $\Ycal_1 = ((\Ycal_{11},\Scal_{11}),(\Ycal_{12},\Scal_{12}))$, $\Ycal_2 = ((\Ycal_{21},\Scal_{21}),(\Ycal_{22},\Scal_{22}))$ is reversely more capable if the product DM-BC $(\Xcal,p(\yt_1,\yt_2|x), \Ytcal_1 \times \Ytcal_2)$ for $\Xcal = (\Xcal_1,\Xcal_2)$, $\Ytcal_1 = (\Ytcal_{11},\Ytcal_{12})$, $\Ytcal_2 = (\Ytcal_{21},\Ytcal_{22})$ is reversely more capable.
Consider
\begin{align*}
I(X_1; Y_{11},S_{11}) &= p_{11} I(X_1;\Yt_{11}) + \bar{p}_{11} I(X_1;\Yt_{21})\\
					&\ge p_{12} I(X_1;\Yt_{11}) + \bar{p}_{12} I(X_1;\Yt_{21})\\
					&=I(X_1;Y_{21},S_{21}).
\end{align*}
Similarly we can show that $I(X_2;Y_{12},S_{12}) \leq I(X_2;Y_{22},S_{22})$.
\end{proof}

An immediate consequence of Theorem~\ref{parallel} is that superposition coding for each component is optimal. The capacity region is the region shown in Theorem 3 of~\cite{Geng--Gohari--Nair--Yu2014} by replacing $Y_{ji}$ with $(Y_{ji},S)$ for $j, i \in [1:2]$. 


\subsection{Gaussian vector channel components}\label{sec:Gaussian}
Consider the BC-TCS with degraded vector Gaussian channel components
\begin{align}\begin{split}\label{model:gaussian}
\mathbf{\Yt_1} &= G \mathbf{X} + \mathbf{Z_1},\\ 
\mathbf{\Yt_2} &= G \mathbf{X} + \mathbf{Z_2},
\end{split}\end{align}
where $\mathbf{X}, \mathbf{Z}_1, \mathbf{Z}_2 \in \mathbb{R}^t$ and $\mathbf{X}$ and $\mathbf{Z}_j$ are independent for $j=1,2$. The channel gain matrix is $G \in \mathbb{R}^{t \times t}$, and $\mathbf{Z_1} \sim \mathcal{N}(0,N_1)$, and $\mathbf{Z_2} \sim \mathcal{N}(0,N_2)$ for some $N_2 - N_1 \succeq 0$. Assume the average transmission power constraint $\sum_{i=1}^n \textbf{x}^T(m_1,m_2,i) \textbf{x}(m_1,m_2,i) \le nP$ for $(m_1,m_2) \in [1:2^{nR_1}] \times [1:2^{nR_2}]$.

By Theorem~\ref{thm:dominance}, the BC-TCS with degraded vector Gaussian channel components is degraded and its capacity region is achieved via superposition coding. In the following, we show that it suffices to consider only Gaussian $(U,X)$.

\begin{proposition}\label{prop:Gaussian}
\textnormal{
The capacity region of a BC-TCS with degraded vector Gaussian components is the set of rate pairs $(R_1, R_2)$ such that
\allowdisplaybreaks
\begin{align}\begin{split}\label{eq:proposition}
R_1 &\leq p_1 \log\frac{ | G K_1 G^T + N_1 | }{|N_1|}+ \bar{p}_1 \log \frac{ | G K_1 G^T + N_2 |}{|N_2|},\\
R_2 &\leq p_2 \log\frac{ | G K G^T + N_1 |}{| G K_1 G^T+ N_1 |} + \bar{p}_2 \log \frac{| G K G^T + N_2 |}{| G K_1 G^T+ N_2 |}
\end{split}\end{align}
for some $K \succeq 0$ for $\text{tr} (K) \le P$ and $K \succeq K_1 \succeq 0$.
}\end{proposition}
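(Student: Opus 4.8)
The plan is to start from the superposition‑coding region \eqref{sc region}, which is the capacity region of the BC-TCS by Theorem~\ref{thm:dominance}(i), and to show that in the degraded vector Gaussian case it is exhausted by jointly Gaussian auxiliaries. Achievability is immediate: take $(U,X)$ jointly Gaussian with $X=U+V$, where $U\sim\mathcal N(0,K-K_1)$ and $V\sim\mathcal N(0,K_1)$ are independent, so that $\cov(X)=K$ with $\tr(K)\le P$, $K\succeq K_1\succeq 0$, and $\cov(X\cond U)=K_1$. Since $S$ is independent of $(U,X)$, each receiver's channel is a mixture of the two Gaussian components, so $I(X;Y_1\cond U,S)$ is the $p_1{:}\bar{p}_1$ mixture of the quantities $\log(|GK_1G^T+N_i|/|N_i|)$ and $I(U;Y_2\cond S)$ is the $p_2{:}\bar{p}_2$ mixture of $\log(|GKG^T+N_i|/|GK_1G^T+N_i|)$; these are exactly the right‑hand sides of \eqref{eq:proposition}. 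The mixed log‑determinants telescope, so $I(X;Y_1\cond U,S)+I(U;Y_2\cond S)=I(X;Y_1\cond S)$ and the sum constraint in \eqref{sc region} is automatically met. Hence the region in \eqref{eq:proposition} is contained in $\Cr$.

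For the converse, note that degradedness gives $I(U;Y_2\cond S)\le I(U;Y_1\cond S)$, so the sum constraint in \eqref{sc region} is implied by the first two, and $\Cr$ is the union over $p(u,x)$ of the rectangles $[0,I(X;Y_1\cond U,S)]\times[0,I(U;Y_2\cond S)]$. Both $\Cr$ and the region $\Rr^\star$ of \eqref{eq:proposition} are closed and convex (convexity of $\Rr^\star$ being checked exactly as for the capacity region of the degraded vector Gaussian broadcast channel), and $\Rr^\star\subseteq\Cr$, so by Lemma~\ref{subseteq} it suffices to show that for all $\mu_1,\mu_2\ge 0$,
\[
\sup_{p(u,x):\,\E[X^TX]\le P}\bigl(\mu_1 I(X;Y_1\cond U,S)+\mu_2 I(U;Y_2\cond S)\bigr)\ \le\ \sup_{K\succeq K_1\succeq 0,\,\tr(K)\le P}\bigl(\mu_1 R_1(K,K_1)+\mu_2 R_2(K,K_1)\bigr),
\]
with $R_1,R_2$ as in \eqref{eq:proposition}. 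Two regimes are elementary. If $\mu_1\ge\mu_2$, the left side is at most $\mu_1(R_1+R_2)\le\mu_1 I(X;Y_1\cond S)$, which by the Gaussian maximum‑entropy bound (differential entropy maximized by a Gaussian of given covariance, together with concavity of $\log\det$) is at most $\mu_1$ times the $R_1$‑value of $\Rr^\star$ at $K_1=K$, i.e. the corner $(C_1,0)$. If $\mu_1/\mu_2\le p_2/p_1$, the inequality $I(X;Y_1\cond U,S)\le(p_1/p_2)\,I(X;Y_2\cond U,S)$, which holds because $p_1\ge p_2$ and $I(X;GX+Z_1\cond U)\ge I(X;GX+Z_2\cond U)$, gives $\mu_1 I(X;Y_1\cond U,S)\le\mu_2 I(X;Y_2\cond U,S)$, so the left side is at most $\mu_2 I(X;Y_2\cond S)$, bounded by maximum entropy by $\mu_2$ times the $R_2$‑value of $\Rr^\star$ at $K_1=0$, i.e. the corner $(0,C_2)$.

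The remaining and main case is $p_2/p_1<\mu_1/\mu_2<1$. Writing the left‑hand objective, up to a constant, as $\mu_2\bigl(p_2 h(GX+Z_1)+\bar{p}_2 h(GX+Z_2)\bigr)+(\mu_1 p_1-\mu_2 p_2)h(GX+Z_1\cond U)+(\mu_1\bar{p}_1-\mu_2\bar{p}_2)h(GX+Z_2\cond U)$, the two conditional‑entropy coefficients have opposite signs in this regime, so a direct use of the conditional vector entropy‑power inequality is lossy: it controls the conditional entropies only through a scalar effective power, whereas \eqref{eq:proposition} needs a matrix $K_1$ --- precisely the obstruction familiar from the vector Gaussian broadcast channel. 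I would resolve it by a channel‑enhancement argument in the style of Weingarten--Steinberg--Shamai, equivalently by the vector extremal inequality of Liu--Viswanath: absorb $G$ by a change of variable (perturbing $G$ to $G+\epsilon I$ and letting $\epsilon\to0$ if $G$ is singular), reducing the components to $X'+Z_1,\,X'+Z_2$ with $N_1\preceq N_2$ under a covariance constraint; then, at a Gaussian‑optimal pair $(K^\star,K_1^\star)$ for the right‑hand side, construct enhanced noise covariances $\tilde N_1\preceq N_1$ and the matching $\tilde N_2$ satisfying the Karush--Kuhn--Tucker/alignment conditions, so that the enhanced BC-TCS is still degraded, still satisfies the power constraint, dominates the original channel, and is such that the conditional EPI applied to its two‑component mixture yields matrix lower bounds on the conditional entropies with common effective covariance $K_1^\star$; this gives the matching upper bound, attained by the Gaussian $(U,X)$ of the first paragraph. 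I expect this construction to be the main obstacle: one must align $N_1$ and $N_2$ simultaneously against $K_1^\star$ (and against $K^\star$), and the two distinct state distributions, $p_1$ for receiver~1 and $p_2$ for receiver~2, enter the KKT conditions for $(K^\star,K_1^\star)$, making the enhancement more delicate than in the single‑component case, while it must still preserve the degradedness from Theorem~\ref{thm:dominance} and the power constraint.
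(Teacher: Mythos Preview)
Your overall architecture matches the paper's: start from the superposition region~\eqref{sc region}, use Lemma~\ref{subseteq} on supporting hyperplanes, dispatch the easy regimes (your $\mu_1\ge\mu_2$ is the paper's $\lambda\le 1$, handled via the corner $(C_1,0)$; your $\mu_1/\mu_2\le p_2/p_1$ is the paper's $\lambda\ge p_1/p_2$, handled via $(0,C_2)$), and attack the middle regime with a Gaussian extremal argument. One small slip: the log-determinants do \emph{not} telescope to an equality; rather $I(X;Y_1\cond U,S)+I(U;Y_2\cond S)\le I(X;Y_1\cond S)$ because $I(U;Y_2\cond S)\le I(U;Y_1\cond S)$ by degradedness. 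Your conclusion (the sum constraint is met) survives.

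The substantive difference is in the middle regime. You propose to redo channel enhancement for the two-state mixture and correctly anticipate a difficulty: aligning $N_1$ and $N_2$ simultaneously against $K_1^\star$, with the two state distributions $p_1,p_2$ both entering the KKT conditions. The paper avoids this entirely by observing that the two-state structure \emph{factors out}. Writing (with $\lambda=\mu_2/\mu_1\in(1,p_1/p_2)$)
\[
\max_{p(u,x)}\bigl(I(X;Y_1\cond U,S)+\lambda I(U;Y_2\cond S)\bigr)
=\max_{p(x)}\Bigl(\lambda I(X;Y_2\cond S)+\CC\bigl[I(X;Y_1\cond S)-\lambda I(X;Y_2\cond S)\bigr]\Bigr),
\]
and expanding in the components gives, inside the concave envelope,
\[
(p_1-\lambda p_2)I(X;\Yt_1)+(\bar p_1-\lambda\bar p_2)I(X;\Yt_2)
=(p_1-\lambda p_2)\bigl(I(X;\Yt_1)-\mu\,I(X;\Yt_2)\bigr),\qquad \mu=1+\tfrac{\lambda-1}{p_1-\lambda p_2}>1.
\]
This is precisely the functional from the \emph{single-component} degraded vector Gaussian BC, whose upper concave envelope under a covariance constraint is known (Weingarten--Steinberg--Shamai / Liu--Viswanath) to be attained by Gaussian $K_1\preceq K$. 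The unconditional terms $I(X;\Yt_j)$ are handled separately by maximum entropy under the same $K$. So only one standard enhancement is needed, on the component pair $(\Yt_1,\Yt_2)$; the probabilities $p_1,p_2$ never interact with the KKT system, and the obstacle you flag does not arise. Your route would ultimately work, but the paper's factoring shows the hard step is already a theorem in the literature rather than a new two-parameter enhancement.
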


\begin{proof}
By Theorem~\ref{thm:dominance}, the capacity region is set of rate pairs $(R_1, R_2)$ such that
\begin{align}\begin{split}\label{C:degraded}
R_1 &\leq I(\mathbf{X};\mathbf{Y_1}|\mathbf{U},S),\\
R_2 &\leq I(\mathbf{U};\mathbf{Y_2}|S)
\end{split}\end{align}
for some pmf $p(\mathbf{u},\mathbf{x})$.

Let $\Cr_\mathrm{G}$ denote the set of rate pairs $(R_1,R_2)$ that satisfy the inequalities in~\eqref{C:degraded} for some $\mathbf{U} \sim \Ncal(0, K_1)$ and $\mathbf{V} \sim \Ncal(0,K-K_1)$, independent of each other, and $\mathbf{X}=\mathbf{U}+\mathbf{V}$ for some $K \succeq K_1 \succeq 0$ and $\text{tr} (K) \le P$. It can be easily shown that $\Cr_\mathrm{G}$ is the set of rate pairs that satisfy inequalities in~\eqref{eq:proposition}. To show that $\Cr_\mathrm{G}$ is the capacity region, we show the following. 
\begin{lemma}\label{lemma:Gaussian}
For all $\lambda \ge 0$,
\begin{align*}
\max_{(R_1,R_2)\in \Cr_\mathrm{G}} (R_1+\lambda R_2) \ge \max_{p(\mathbf{u},\mathbf{x})\colon \E[\mathbf{X}^T\mathbf{X}] \le P} (I(\mathbf{X};\mathbf{Y_1}|\mathbf{U},S)+\l I(\mathbf{U};\mathbf{Y_2}|S)).
\end{align*}
\end{lemma}
The proof of this lemma is in Appendix~\ref{proof:lemma:Gaussian}. The proof of Proposition~\ref{prop:Gaussian} is completed using Lemma~\ref{subseteq}.
\end{proof}

\begin{remark}\textnormal{Recall that Gaussian superposition coding and dirty paper coding both achieve the capacity region of Gaussian BC-TCS when $(p_1,p_2)=(1,0)$, i.e., when the channel gain is fixed. For general $(p_1,p_2)$, Gaussian superposition coding achieves the capacity region, but dirty paper coding does not. See Appendix~\ref{appendix:dpc} for the proof.
}\end{remark}
\section{More than two channel state components}\label{sec:More}
In this section we consider the BC with more than two channel state components. Consider a DM-BC with random state, where the state $S=(S_1, S_2)\in [1:k]^2$, $p_{S_1}(i)=p_i$ and $p_{S_2}(i)=q_i$, channel components $p(\yt_i|x)$ for $i \in [1:k]$, and outputs $Y_1 = \Yt_i$ if $S_1 = i$ and $Y_2 = \Yt_i$ if $S_2 = i$ for $i \in [1:k]$.


In the following we establish several results when $k>2$.
\subsection{Binary erasure broadcast channel with $k$ channel components}
Consider a BC with $k$ state components where the  channel $p(\yt_i|x)$ is a BEC($\e_i$), $0 \le \e_i \le 1$, for $i \in [1:k]$. We show that this channel is always less noisy.

\begin{theorem}\textnormal{The binary erasure broadcast channel with $k$ channel state components with the state known only at the receivers  is always less noisy.
}\end{theorem}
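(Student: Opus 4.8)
The statement to prove is: a BC-TCS (with state known at the receivers) whose $k$ channel components are all binary erasure channels is always less noisy in one direction or the other. Since both $Y_1$ and $Y_2$ are selected from the same pool of BECs, the natural ordering should be driven entirely by the \emph{marginal distributions} of $S_1$ and $S_2$. The plan is to reduce the less-noisy comparison to a pointwise inequality between two sums of the form $\sum_i p_i\, g(\e_i)$ and $\sum_i q_i\, g(\e_i)$, where $g$ is the relevant information quantity for a single BEC, and then argue that whichever receiver stochastically ``sees less erasure'' on average is the less noisy one.

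First I would use Van-Dijk's characterization (quoted in the excerpt right before Definition of more capable): receiver $(Y_1,S)$ is less noisy than $(Y_2,S)$ iff $I(X;Y_1|S) - I(X;Y_2|S)$ is concave in $p(x)$, equivalently equals its own upper concave envelope. For the BC-TCS, $I(X;Y_j|S) = \sum_{i=1}^k r_i^{(j)}\, I(X;\Yt_i)$, where $r^{(1)} = p$ and $r^{(2)} = q$ are the pmfs of $S_1,S_2$ respectively. For a BEC$(\e_i)$ with input $X$, $I(X;\Yt_i) = (1-\e_i)\, H(X)$. Hence $I(X;Y_1|S) - I(X;Y_2|S) = \big(\sum_i (p_i - q_i)(1-\e_i)\big) H(X) = c\, H(X)$ for the constant $c := \sum_i (p_i - q_i)(1-\e_i)$. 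Since $H(X)$ is concave in $p(x)$, the difference is concave whenever $c \ge 0$ and convex whenever $c \le 0$: in the first case $(Y_1,S)$ is less noisy than $(Y_2,S)$, in the second case $(Y_2,S)$ is less noisy than $(Y_1,S)$. One of the two always holds, which is exactly the claim.

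The only genuinely non-trivial point is justifying that the less-noisy condition for the \emph{state-augmented} channel really does reduce to concavity of $I(X;Y_1|S)-I(X;Y_2|S)$ in $p(x)$. Van-Dijk's theorem is stated for ordinary DM-BCs; here the ``outputs'' are $(Y_1,S)$ and $(Y_2,S)$, so I would note that the BC-TCS \emph{is} an ordinary DM-BC with these enlarged output alphabets (as already observed in the Introduction of the paper) and that $S$ is independent of $X$, so that for any auxiliary $U$ with $U \to X \to (Y_1,S),(Y_2,S)$ we have $I(U;Y_j|S) = I(U;Y_j,S)$ and the whole machinery applies verbatim. Alternatively, and perhaps more cleanly, I would verify the less-noisy inequality directly: for any $p(u,x)$,
\begin{align*}
I(U;Y_1|S) - I(U;Y_2|S) &= \sum_i (p_i - q_i)\, I(U;\Yt_i)\\
&= \Big(\sum_i (p_i - q_i)(1-\e_i)\Big)\big(H(X) - H(X|U)\big)\\
&= c\,\big(H(X) - H(X|U)\big),
\end{align*}
using $I(U;\Yt_i) = (1-\e_i)(H(X)-H(X|U))$ for a BEC, and since $H(X)-H(X|U) = I(U;X) \ge 0$, this is $\ge 0$ for all $p(u,x)$ precisely when $c \ge 0$ and $\le 0$ for all $p(u,x)$ precisely when $c \le 0$. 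Either way the channel is less noisy in one direction, completing the proof; this direct route avoids invoking Van-Dijk altogether and is the version I would actually write down.
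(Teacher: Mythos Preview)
Your direct argument is correct and is essentially the paper's own proof: both compute $I(U;Y_j,S)=I(U;Y_j|S)=\big(\sum_i r_i^{(j)}(1-\e_i)\big)I(U;X)$ for a BEC and conclude that the sign of $I(U;Y_1|S)-I(U;Y_2|S)$ is the fixed sign of your constant $c$ times the nonnegative quantity $I(U;X)$. The only cosmetic difference is that the paper names your constant $c=\sum_i(p_i-q_i)(1-\e_i)$ as $C_1-C_2$, the difference of the two receivers' capacities, and then assumes $C_1\ge C_2$ without loss of generality.
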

\begin{proof}
Without loss of generality, assume that the capacity of channel $p(y_1,s|x)$, $C_1$, is larger than the capacity of the channel $p(y_2,s|x)$, $C_2$. Then for any $p(u,x)$,
\begin{align*}
I(U;Y_1,S)
&=H(U)-\sum_{i=1}^{k}p_i \big(\e_i H(U)+(1-\e_i) H(U|X)\big)\\
&=C_1 I(U;X)\\
&\ge C_2 I(U;X)\\
&=I(U;Y_2,S).
\end{align*}
\end{proof}
An immediate consequence of this theorem is that the capacity region is achieved via
superposition coding. Since $I(U;Y_1|S) = C_1 I(U;X)$ and $I(X;Y_2|S) = H(X) - \sum_{i=1}^k p_i \e_i H(X) = C_2 H(X)$, superposition coding inner bound in~\eqref{sc region} is equivalent to the set of rate pairs that satisfy 
\begin{align*}
R_1 &\leq C_1 I(U;X),\\
R_2 &\leq C_2 H(X|U),\\
R_1+R_2 &\leq C_2 H(X)
\end{align*}
for some $p(u,x)$. 
It can be easily seen that any achievable rate pair $(R_1,R_2)$ satisfies $R_1/C_1+R_2/C_2 \leq H(X) \leq 1$, and the rate pairs $(C_1,0)$ and $(0,C_2)$ are achievable. Thus capacity region is the set of rate pairs $(R_1,R_2)$ such that
\begin{align*}
\frac{R_1}{C_1}+\frac{R_2}{C_2} \leq 1.
\end{align*}


\subsection{Binary symmetric broadcast channel with three channel components}
Consider a BC with three channel state components where the channel $p(\yt_i|x)$ is a BSC($\a_i$), $0 \le \a_i \le 1$, for $i \in [1:3]$. We can show that superposition coding is optimal for this channel.

%
%

\begin{theorem}\label{thm:bsbc}
\textnormal{The BC with three binary symmetric channel state components is more capable or dominantly c-symmetric.
} \end{theorem}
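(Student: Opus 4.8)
The plan is to reduce the claim to a one-dimensional dichotomy for the binary input and then settle that dichotomy by counting sign changes of a power series whose coefficients form a three-term exponential sum. First I would note that each component channel $X\to(Y_j,S_j)$ is the mixture $\sum_{i=1}^{3}r_{ji}\,\mathrm{BSC}(\alpha_i)$ of the three crossover channels, with $(r_{1i})=(p_i)$ and $(r_{2i})=(q_i)$, and is therefore c-symmetric: flipping the $Y_j$-coordinate while fixing the state symbol realizes the permutation required by Definition~\ref{def4}, so the BC is c-symmetric and the reference input in~\eqref{condition:dominant} is $X\sim\mathrm{Bern}(1/2)$. For $X\sim\mathrm{Bern}(\theta)$ we have $I(X;Y_j|S)=\sum_{i=1}^{3}r_{ji}\bigl(H(\alpha_i*\theta)-H(\alpha_i)\bigr)$, with $\alpha*\theta:=\alpha(1-\theta)+(1-\alpha)\theta$ and $H$ the binary entropy, so the gap $D(\theta):=I(X;Y_1|S)-I(X;Y_2|S)$ equals $G(\theta)-G(0)$, where $G(\theta):=\sum_{i=1}^{3}(p_i-q_i)H(\alpha_i*\theta)$. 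Since $\sum_i(p_i-q_i)=0$ we get $G(1/2)=0$; moreover $G$ is symmetric about $\theta=1/2$ and $D(1/2)=-G(0)$, so we may assume $G(0)\le 0$ (otherwise swap the two receivers). Because ``$Y_1$ more capable than $Y_2$'' means exactly $D\ge 0$ on $[0,1]$, i.e.\ $G\ge G(0)$, and ``$Y_1$ dominantly c-symmetric'' (condition~\eqref{condition:dominant}) means exactly $D(\theta)\le D(1/2)$, i.e.\ $G\le 0$, it suffices to prove the dichotomy: \emph{either} $G\ge G(0)$ everywhere \emph{or} $G\le 0$ everywhere.

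Next I would substitute $u=1-2\theta$, so that $\alpha_i*\theta=(1-\gamma_i u)/2$ with $\gamma_i:=1-2\alpha_i\in[0,1]$ and $G(\theta)=\widehat G(u):=\sum_{i=1}^{3}(p_i-q_i)\,\Psi(\gamma_i u)$, where $\Psi(x):=H\bigl((1-x)/2\bigr)$. The Taylor expansion of the binary entropy at $1/2$ gives $\Psi(x)=1-\sum_{k\ge 1}d_k x^{2k}$ with $d_k=\tfrac{1}{2(\ln 2)\,k(2k-1)}>0$ and $\sum_k d_k=1$, so the series converges on $[-1,1]$. Cancelling the constant via $\sum_i(p_i-q_i)=0$ gives $\widehat G(u)=-\sum_{k\ge 1}d_k m_k\,u^{2k}$ with $m_k:=\sum_{i=1}^{3}(p_i-q_i)\gamma_i^{2k}$, together with $\widehat G(0)=G(1/2)=0$ and $\widehat G(1)=G(0)\le 0$. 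In the variable $v=u^2\in[0,1]$, $\widehat G=\sum_{k\ge 1}b_k v^k$ is a power series with $b_k=-d_k m_k$, whose coefficient sign pattern coincides with that of $(-m_k)$; by the symmetry of $G$ about $1/2$ it is enough to control $\widehat G$ on $[0,1]$.

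The crux is to show that $(m_k)_{k\ge 1}$ changes sign at most once. Discarding repeated $\gamma_i$ and any $\gamma_i=0$ (which only lowers the count), the $\gamma_i^2$ are distinct and positive and $m_k=f(k)$ for the exponential sum $f(s)=\sum_{i=1}^{3}(p_i-q_i)(\gamma_i^2)^{s}$ with three distinct real frequencies; such a sum has at most two real zeros, and since $f(0)=\sum_i(p_i-q_i)=0$ it has at most one zero on $(0,\infty)$, so $(m_k)_{k\ge 1}$ has at most one sign change. By Descartes' rule of signs in its power-series form, $\widehat G$ and $\widehat G'$ then each have at most one zero on $(0,\infty)$, so on $[0,1]$ the function $\widehat G$ is monotone or turns exactly once, with the sign of $\widehat G'$ immediately to the right of $0$ equal to $\sigma:=\sgn(-m_{k_0})$, $k_0:=\min\{k:m_k\ne 0\}$ (if no such $k_0$ exists then $G\equiv 0$, so the BC is degraded and a fortiori more capable). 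If $\sigma<0$, then $\widehat G$ is nonincreasing near $0$, hence its maximum over $[0,1]$ is attained at an endpoint and equals $\max\bigl(\widehat G(0),\widehat G(1)\bigr)=\max(0,G(0))=0$; thus $G\le 0$ everywhere and $Y_1$ is dominantly c-symmetric. If $\sigma>0$, symmetrically, the minimum of $\widehat G$ over $[0,1]$ is attained at an endpoint and equals $\min(0,G(0))=G(0)$; thus $G\ge G(0)$ everywhere and $Y_1$ is more capable. This establishes the dichotomy and hence the theorem.

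The one genuinely delicate point is the sign-change count for $(m_k)_{k\ge 1}$: the plain exponential-sum bound yields only ``at most two sign changes,'' and the sharpening to ``at most one'' is exactly what the identity $f(0)=\sum_i(p_i-q_i)=0$ provides; this is also the place where ``three components'' is essential, since for four $\mathrm{BSC}$ components the same argument gives only ``at most two,'' which is insufficient (consistent with the failure for four BSC components noted next). Everything past this count is the routine Descartes/monotonicity bookkeeping above, and the degenerate configurations ($\alpha_i$ equal, or $\alpha_i\in\{0,1/2\}$) only reduce the number of sign changes.
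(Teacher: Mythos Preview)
Your argument is correct and takes a genuinely different route from the paper's. The paper computes $D''(x)$ directly, uses $\sum_i(p_i-q_i)=0$ to collapse the three-term sum to two terms, and observes that $D''(x)=0$ is (after clearing denominators) a quadratic in $x$; hence $D''$ has at most two zeros, $D'$ at most three, and the symmetry about $x=1/2$ together with $D(0)=0\le D(1/2)$ forces the more-capable/dominantly-c-symmetric dichotomy. You instead expand $H\bigl((1-\gamma u)/2\bigr)$ as a power series, reduce the question to the sign pattern of $m_k=\sum_i(p_i-q_i)\gamma_i^{2k}$, and use that the exponential sum $f(s)=\sum_i(p_i-q_i)(\gamma_i^2)^s$ has at most two real zeros while $f(0)=0$ burns one of them, leaving at most one sign change on $(0,\infty)$; the one-sign-change Descartes step then gives unimodality of $\widehat G$ in $v=u^2$, and the endpoint values $\widehat G(0)=0$, $\widehat G(1)=G(0)\le 0$ finish the dichotomy exactly as you wrote. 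Both proofs ultimately pivot on the identity $\sum_i(p_i-q_i)=0$, but your framing makes the role of ``three'' explicit (three exponentials $\Rightarrow$ two zeros $\Rightarrow$ one after spending the zero at $s=0$) and immediately explains why the bound degrades to two sign changes for four BSC components; the paper's computation is quicker but more ad hoc. One cosmetic point: the ``power-series Descartes'' step is not entirely standard, so you may want to replace the appeal to it by the elementary observation that if the coefficients of $\phi'(v)=\sum_{k\ge1}k b_k v^{k-1}$ change sign once at index $K$, then $v^{1-K}\phi'(v)$ is monotone on $(0,\infty)$, which yields the needed ``at most one critical point'' directly.
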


\begin{proof}
Let $D(x) = I(X;Y_1|S)-I(X;Y_2|S)$ for $X \sim \Bern(x)$, i.e.,
\begin{align*}
D(x) = \sum_{i=1}^{3}p_i H(x*\alpha_i)-\sum_{i=1}^{3} p_i H(\alpha_i)-\Big(\sum_{i=1}^{3}q_i H(x*\alpha_i)-\sum_{i=1}^{3} q_i H(\alpha_i)\Big),
\end{align*}
where $a*b = a(1-b) + b(1-a)$ for $a,b\in [0,1]$.
Without loss of generality, we assume $D(0.5)=C_1-C_2 \ge 0$. 

The DM-BC $(\Xcal, p(y_1,y_2,s|x), (\Ycal_1,\Scal) \times (\Ycal_2,\Scal))$ is dominantly c-symmetric if $X \to (Y_1,S)$ and $X \to (Y_2,S)$ are c-symmetric  and $I(X;Y_1|S)_p - I(X;Y_2|S)_p \leq I(X;Y_1|S)_u - I(X;Y_2|S)_u$.

Note that the proof of part (iv) of Theorem~\ref{thm:dominance} which shows that $X \to (Y_j,S)$ are c-symmetric if $X \to \Yt_j$ are symmetric for $j=1,2$ does not rely on the cardinality of $\Scal$. Thus the proof can be extended to show that  $X \to (Y_j,S)$ for $j=1,2$ are c-symmetric for BC with three channel state components. 
In order to show  the DM-BC $(\Xcal, p(y_1,y_2,s|x), (\Ycal_1,\Scal) \times (\Ycal_2,\Scal))$ is more capable or dominantly c-symmetric, we now show that $D(0.5) \ge D(x)$ for every $x \in [0,1]$ or $D(x) \ge 0$ for every $x \in [0,1]$. After some computation, we obtain
\begin{align*}
D''(x)= \frac{(p_1-q_1)((1-2\a_3)^2\a_1\bar{\a_1}-(1-2\a_1)^2\a_3\bar{\a_3})}{(x*\a_1)(1-x*\a_1)(x*\a_3)(1-x*\a_3)}+ \frac{(p_2-q_2)((1-2\a_3)^2\a_2\bar{\a_2}-(1-2\a_2)^2\a_3\bar{\a_3})}{(x*\a_2)(1-x*\a_2)(x*\a_3)(1-x*\a_3)}.
\end{align*}
Note that $D''(x) = 0$ if 
\[
\frac{(p_1-q_1)((1-2\a_3)^2\a_1\bar{\a_1}-(1-2\a_1)^2\a_3\bar{\a_3})}{(x*\a_1)(1-x*\a_1)}+ \frac{(p_2-q_2)((1-2\a_3)^2\a_2\bar{\a_2}-(1-2\a_2)^2\a_3\bar{\a_3})}{(x*\a_2)(1-x*\a_2)} = 0.
\]
Since $D''(x)=0$ has at most two solutions in $(0,1)$, $D'(x) = 0$ has at most three solutions in $(0,1)$. Since $D'(0.5) = 0$ and $D(x) = D(1-x)$, i.e., symmetric with respect to $x=0.5$, $D'(x)=0$ has one solution or three solutions. If it has one solution, $D(x)$ is concave (see Figure~\ref{fig:bssc}-(a) for an example). If it has three solutions, $D(x) \ge 0$ or $D(0.5) \ge D(x)$ for $x \in [0,1]$ as illustrated in Figure~\ref{fig:bssc}-(b) and~\ref{fig:bssc}-(c).

\begin{figure}[h]
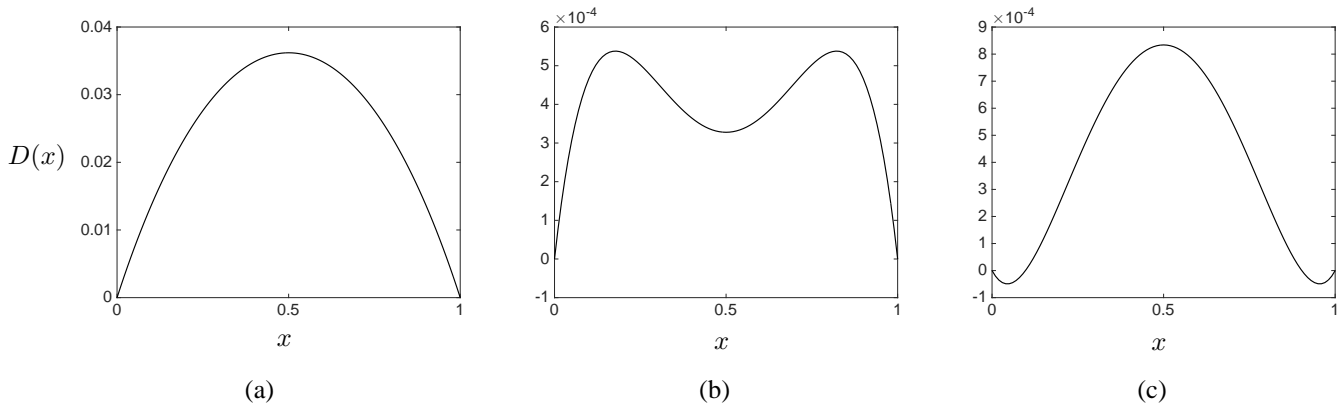

\begin{center}
\begin{tabular}{ccccccc}
\psfrag{a}[c]{(a)}
\psfrag{b}[c]{(b)}
\psfrag{c}[c]{(c)}
\psfrag{x}[cc]{$x$}
\psfrag{y}[c]{$D(x)\ \ \ \ \ \ \ \ $}
\ \ \includegraphics[scale=0.29]{bsbc1.eps}&&
\psfrag{x}[cc]{$x$}
\psfrag{y}[c]{}
\includegraphics[scale=0.29]{bsbc2.eps}&&
\psfrag{x}[cc]{$x$}
\psfrag{y}[c]{}
\includegraphics[scale=0.29]{bsbc3.eps}\\
\\
(a)&&(b)&&(c)
\end{tabular}
\end{center}
\caption{Examples of $D(x)$ vs $x$ for $\a = [0.2, 0.3, 0.4]$, $p=[1/3,1/3,1/3]$ and (a) $q=[0.2, 0.3, 0.5]$ (b) $q=[0.2, 0.7, 0.1]$ (c) $q=[0.45, 0, 0.55]$.}\label{fig:bssc}
\end{figure}


\end{proof}

An immediate consequence of this theorem is that the capacity region of the BC with three BSC state components when the state is known only at the receivers is the set of rate pairs $(R_1,R_2)$ that satisfy the inequalities in~\eqref{sc region}. We now show that this region is reduced to the set of rate pairs $(R_1,R_2)$ such that
\begin{align}\begin{split}\label{region:bsc}
R_1 &\le  \sum_{i=1}^{3} p_i \left(\sum_{j=1}^{2} \gamma_j H(\beta_j *\alpha_i)- H(\alpha_i) \right),\\ 
R_2 &\le 1 - \sum_{i=1}^{3} q_i\sum_{j=1}^{2} \gamma_j H(\beta_j*\alpha_i),\\
R_1+R_2 &\le 1-\sum_{i=1}^3 p_i H(\a_i)
\end{split}\end{align}
for some $0 \leq \gamma_j, \beta_j \leq 1$ for $j \in [1:2]$ such that $\gamma_1+\gamma_2 = 1$.

Suppose a rate pair $(R_1,R_2)$ satisfies the inequalities in~\eqref{sc region} for some $p(u,x)$ such that $|\Ucal|=3$. Then, this rate pair is also achievable with the following $(U',X')$ such that $U' \in \{-3,-2,-1,1,2,3\}$ and 
\begin{gather*}
  p_{U'}(u)=p_{U'}(-u)=\frac{1}{2} p_U(u), u \in \{1,2,3\},\\
  p_{X'|U'}(x|u)=p_{X'|U'}(1-x|-u)=p_{X|U}(x|u), (u,x) \in \{1,2,3\} \times \{0,1\}.
\end{gather*}

Further let $(Y'_1, Y'_2)$ be the output when the input is $X'$. It can be easily seen that $H(Y'_1|U',S)=H(Y_1|U,S)$ and $H(Y'_2|U',S)=H(Y_2|U,S)$. Also note that $H(Y_1'|S)=H(Y_2'|S) = 1$ because $X' \sim \Bern(1/2)$. Thus, $I(U;Y_1|S) \le I(U';Y_1'|S)$, $I(X;Y_2|U,S) \le I(X';Y_2'|U',S)$, and $I(X;Y_2|S) \le I(X';Y_2'|S)$.

Therefore, it suffices to evaluate the superposition rate region with the above symmetric input pmfs $p(u',x'),$ and the capacity region is the set of rate pairs $(R_1,R_2)$ that satisfy
\begin{align*}
R_1 &\le \sum_{i=1}^{3} p_i \left(\sum_{j=1}^{3} \gamma_j H(\beta_j *\alpha_i)- H(\alpha_i) \right),\\
R_2 &\le 1-\sum_{i=1}^{3}q_i \sum_{j=1}^3 \gamma_j H(\beta_j*\a_i),\\
R_1+R_2 &\le 1-\sum_{i=1}^3 p_i H(\a_i)
\end{align*}
for some $0 \leq \gamma_j, \beta_j \leq 1$ for $j \in [1:3]$ such that $\gamma_1+\gamma_2+\gamma_3 = 1$. Note that this rate region can be written as an intersection of two rate regions, $\Rr_1 \cap \Rr_2$, where $\Rr_1 = \{(R_1,R_2)\colon R_1+R_2 \le 1-\sum_{i=1}^3 p_i H(\a_i)\}$ and 
\begin{align*}
\Rr_2 = \Big\{(R_1,R_2):R_1 &\le \sum_{i=1}^{3} p_i \left(\sum_{j=1}^{3} \gamma_j H(\beta_j *\alpha_i)- H(\alpha_i) \right),\\
R_2 &\le 1-\sum_{i=1}^{3}q_i \sum_{j=1}^3 \gamma_j H(\beta_j*\a_i) \text{ for some } 0 \le \gamma_j, \beta_j \le 1 \text{ s.t. } \sum_{j=1}^3 \gamma_j = 1, j \in [1:3]\Big\}.
\end{align*}
Let $\bar \Rr_{2}$ denote the \emph{convex hull} of the set of rate pairs $(R_1,R_2)$ such that 
\begin{align*}
R_1 &\le \sum_{i=1}^{3} p_i \left(H(\beta *\alpha_i)- H(\alpha_i) \right),\\
R_2 &\le 1-\sum_{i=1}^{3}q_i H(\beta*\a_i)
\end{align*}
for some $0 \le \beta \le 1$. Note that since $\bar \Rr_2$ is a convex set in 2-dimension, all rate pairs in $\bar \Rr_2$ is a convex combination of two rate pairs included in $\bar \Rr_2$. Thus,
\begin{align*}
\bar \Rr_2 = \Big\{(R_1,R_2):R_1 &\le \sum_{i=1}^{3} p_i \left(\sum_{j=1}^{2} \gamma_j H(\beta_j *\alpha_i)- H(\alpha_i) \right),\\
R_2 &\le 1-\sum_{i=1}^{3}q_i \sum_{j=1}^2 \gamma_j H(\beta_j*\a_i) \text{ for some } 0 \le \gamma_j, \beta_j \le 1 \text{ s.t. } \sum_{j=1}^2 \gamma_j = 1, j \in [1:2]\Big\}.
\end{align*}
Note that this rate region is a subset of $\Rr_2$, so $\bar \Rr_2 \subseteq \Rr_2$. Also it can be easily seen that $\Rr_2 \subseteq \bar \Rr_2$, and so $\Rr_2 = \bar \Rr_2$. Therefore, the capacity region for BC-TCS with three BSC channel components is $\Rr_1 \cap \bar \Rr_2$, the region shown in~\eqref{region:bsc}.

In the following we show that superposition coding is not in general optimal for BC with more than three BSC state components.
\subsection{Binary symmetric broadcast channel with four channel components}
Consider a BC-TCS with BSC components with $\alpha_1 =0.28, \alpha_2 =0.04, \alpha_3=0.02, \alpha_4=0.18$, and $p=[0.38,0.62,0,0]$ and $q=[0,0,0.38,0.62]$. Thus $C_1=0.5247$, $C_2=0.5246$, and the maximum sum rate for superposition coding is $\max(C_1,C_2) = 0.5247$.

Now we consider the Marton inner bound. In~\cite{Geng--Jog--Nair--Wang2013}, Geng, Jog, Nair and Wang showed that for binary input broadcast channels, Marton's inner bound reduces to the set of rate pairs $(R_1,R_2)$ such that
\begin{align}\begin{split}\label{marton-inner}
R_1 &< I(W;Y_1) + \sum_{j=1}^k\beta_j I(X;Y_1|W=j),\\
R_2 &< I(W;Y_2) + \sum_{j=k+1}^5 \beta_j I(X;Y_2|W=j),\\
R_1 + R_2 &< \min\{I(W;Y_1), I(W;Y_2)\} +  \sum_{j=1}^k\beta_j I(X;Y_1|W=j) + \sum_{j=k+1}^5 \beta_j I(X;Y_2|W=j)
\end{split}\end{align}
for some $p_W(j)=\beta_j$, $j\in [1:5]$, and $p(x|w)$. This region is achieved using \emph{randomized time-division}~\cite{Hajek--Pursley1979}. 
This ingenious insight helps simplify the computation of Marton's inner bound for BC-TCS with BSC components. 
In this case, the maximum sum rate is $0.5250$ and is strictly greater than maximum sum rate for superposition coding. 
Thus, superposition coding is suboptimal. It is not known whether Marton coding is optimal, however, because there is a gap between the Marton maximum sum rate and the sum rate for the \emph{UV} outer bound, which in this case is at least 0.5256.

\section{Conclusion}
We established the capacity region of several classes of BC-TCS channel when the state is known only at the receivers. 
When the channel state components are deterministic, the capacity region is achieved via Marton coding. This is an interesting result because this channel model does not belong to any class of broadcast channels for which the capacity was previously known. When the channel state components are ordered, the BC-TCS is also ordered and the capacity region is achieved via superposition coding. We showed that when the BC-TCS has degraded vector Gaussian channel components, the capacity region is attained via Gaussian input and auxiliary random variables. We extended our results on ordered channel components to two example channels with more than two channel components, but showed that this extension does not hold in general.
\bibliographystyle{IEEEtran}
\bibliography{nit}

\newcommand{\noopsort}[1]{}
\begin{thebibliography}{10}
\providecommand{\url}[1]{#1}
\csname url@samestyle\endcsname
\providecommand{\newblock}{\relax}
\providecommand{\bibinfo}[2]{#2}
\providecommand{\BIBentrySTDinterwordspacing}{\spaceskip=0pt\relax}
\providecommand{\BIBentryALTinterwordstretchfactor}{4}
\providecommand{\BIBentryALTinterwordspacing}{\spaceskip=\fontdimen2\font plus
\BIBentryALTinterwordstretchfactor\fontdimen3\font minus
  \fontdimen4\font\relax}
\providecommand{\BIBforeignlanguage}[2]{{%
\expandafter\ifx\csname l@#1\endcsname\relax
\typeout{** WARNING: IEEEtran.bst: No hyphenation pattern has been}%
\typeout{** loaded for the language `#1'. Using the pattern for}%
\typeout{** the default language instead.}%
\else
\language=\csname l@#1\endcsname
\fi
#2}}
\providecommand{\BIBdecl}{\relax}
\BIBdecl

\bibitem{EG--Kim2011}
A.~El~Gamal and Y.-H. Kim, \emph{Network Information Theory}.\hskip 1em plus
  0.5em minus 0.4em\relax Cambridge: Cambridge, 2011.

\bibitem{Geng--Gohari--Nair--Yu2014}
Y.~Geng, A.~Gohari, C.~Nair, and Y.~Yu, ``On marton's inner bound and its
  optimality for classes of product broadcast channels,'' \emph{{IEEE} Trans.
  Inf. Theory}, vol.~60, no.~1, pp. 22--41, Jan 2014.

\bibitem{Jafarian--Vishwanath2011}
A.~Jafarian and S.~Vishwanath, ``The two-user gaussian fading broadcast
  channel,'' in \emph{Proc. {IEEE} Int. Symp. Inf. Theory}, July/August 2011,
  pp. 2964--2968.

\bibitem{Liang--Goldsmith2005}
Y.~Liang and A.~Goldsmith, ``Rate regions and optimal power allocation for {TD}
  fading broadcast channels without {CSIT},'' in \emph{Proc. 43rd Ann. Allerton
  Conf. Comm. Control Comput.}, Sep. 2005.

\bibitem{Tse--Yates2012}
D.~Tse and R.~Yates, ``Fading broadcast channels with state information at the
  receivers,'' \emph{{IEEE} Trans. Inf. Theory}, vol.~58, no.~6, pp.
  3453--3471, 2012.

\bibitem{Kim--EG2014}
H.~Kim and A.~El~Gamal, ``Capacity region of the broadcast channel with two
  deterministic channel state components,'' in \emph{Proc. {IEEE} Int. Symp.
  Inf. Theory}, June 2014, pp. 1952--1956.

\bibitem{Marton1979}
K.~Marton, ``A coding theorem for the discrete memoryless broadcast channel,''
  \emph{{IEEE} Trans. Inf. Theory}, vol.~25, no.~3, pp. 306--311, 1979.

\bibitem{Nair--EG2007}
C.~Nair and A.~\EG, ``An outer bound to the capacity region of the broadcast
  channel,'' \emph{{IEEE} Trans. Inf. Theory}, vol.~53, no.~1, pp. 350--355,
  Jan. 2007.

\bibitem{Abbe--Zheng2009}
E.~Abbe and L.~Zheng, ``Coding along hermite polynomials for gaussian noise
  channels,'' in \emph{Proc. {IEEE} Int. Symp. Inf. Theory}, 2009, pp.
  1644--1648.

\bibitem{Eggleston1958}
H.~G. Eggleston, \emph{Convexity}.\hskip 1em plus 0.5em minus 0.4em\relax
  Cambridge: Cambridge University Press, 1958.

\bibitem{Kim--Chia--EG2015}
H.~Kim, Y.-K. Chia, and A.~El~Gamal, ``A note on the broadcast channel with
  stale state information at the transmitter,'' \emph{{IEEE} Trans. Inf.
  Theory}, vol.~61, no.~7, pp. 3622--3631, July 2015.

\bibitem{Blackwell--Breiman--Thomasian1958}
D.~Blackwell, L.~Breiman, and A.~J. Thomasian, ``Proof of {S}hannon's
  transmission theorem for finite-state indecomposable channels,'' \emph{Ann.
  Math. Statist.}, vol.~29, pp. 1209--1220, 1958.

\bibitem{Maddah-Ali--Tse2010}
M.~Maddah-Ali and D.~Tse, ``Completely stale transmitter channel state
  information is still very useful,'' in \emph{{IEEE} Trans. Inf. Theory},
  vol.~58, no.~7, 2012, pp. 4418--4431.

\bibitem{Cover1972}
T.~M. Cover, ``Broadcast channels,'' \emph{{IEEE} Trans. Inf. Theory}, vol.~18,
  no.~1, pp. 2--14, Jan. 1972.

\bibitem{Korner--Marton1975}
J.~K{\"o}rner and K.~Marton, ``Comparison of two noisy channels,'' in
  \emph{Topics in Information Theory (Colloquia Mathematica Societatis J\'anos
  Bolyai, Keszthely, Hungary, 1975)}, I.~Csisz{\'a}r and P.~Elias, Eds.\hskip
  1em plus 0.5em minus 0.4em\relax Amsterdam: North-Holland, 1977, pp.
  411--423.

\bibitem{van-Dijk1997}
M.~van Dijk, ``On a special class of broadcast channels with confidential
  messages,'' \emph{{IEEE} Trans. Inf. Theory}, vol.~43, no.~2, pp. 712--714,
  Mar 1997.

\bibitem{Nair2010}
C.~Nair, ``Capacity regions of two new classes of two-receiver broadcast
  channels,'' \emph{{IEEE} Trans. Inf. Theory}, vol.~56, no.~9, pp. 4207--4214,
  Sep. 2010.

\bibitem{Geng--Jog--Nair--Wang2013}
Y.~Geng, V.~Jog, C.~Nair, and Z.~Wang, ``An information inequality and
  evaluation of {M}arton's inner bound for binary input broadcast channels,''
  \emph{{IEEE} Trans. Inf. Theory}, vol.~59, no.~7, pp. 4095--4105, July 2013.

\bibitem{Hajek--Pursley1979}
B.~E. Hajek and M.~B. Pursley, ``Evaluation of an achievable rate region for
  the broadcast channel,'' \emph{{IEEE} Trans. Inf. Theory}, vol.~25, no.~1,
  pp. 36--46, 1979.

\end{thebibliography}
\appendices

\section{Proof of theorem~\ref{capacity:Marton:common}}\label{cap:common TDCS}
Let $\Cr_\mathrm{o}$ denote the region shown in Theorem~\ref{capacity:Marton:common}. Achievability follows immediately since $\Cr_\mathrm{o}$ is included in Marton's inner bound with common message. 

To establish the converse, we show that the capacity region coincides with the \emph{UVW} outer bound. The \emph{UVW} outer bound for the broadcast channel with state known at the receivers states that if a rate tuple $(R_0,R_1,R_2)$ is achievable, then it must satisfy the inequalities
\begin{align*}
R_0 &\le \min\{I(U_0;Y_1|S), U(U_0;Y_2|S)\},\\
R_0+R_1 &\le I(U_1;Y_1|U_0,S) + \min\{I(U_0;Y_1|S), U(U_0;Y_2|S)\},\\
R_0+R_2 &\le I(U_2;Y_2|U_0,S) + \min\{I(U_0;Y_1|S), U(U_0;Y_2|S)\},\\
R_0+R_1+R_2 &\le \min\{I(U_0;Y_1|S), U(U_0;Y_2|S)\} + I(U_1;Y_1|U_2,U_0,S) + I(U_2;Y_2|U_0,S),\\
R_0+R_1+R_2 &\le \min\{I(U_0;Y_1|S), U(U_0;Y_2|S)\} + I(U_1;Y_1|U_0,S) + I(U_2;Y_2|U_1,U_0,S)
\end{align*}
for some pmf $p(u_0,u_1,u_2,x)$. Let this outer bound be denoted by $\bar{\Rr}_\mathrm{o}$. We now show that every supporting hyperplane of $\bar \Rr_\mathrm{o}$ intersects $\Cr_\mathrm{o}$, i.e.
\begin{align}\label{ineq0}
\max_{(R_0,R_1,R_2) \in \bar \Rr_\mathrm{o}} (\l_0 R_0 + \l_1 R_1 + \l_2 R_2) \le \max_{(r_0,r_1,r_2) \in \Cr_\mathrm{o}} (\l_0 r_0 + \l_1 r_1 + \l_2 r_2).
\end{align}
We consider different ranges of $(\l_0,\l_1,\l_2)$ and show that the inequality~\eqref{ineq0} always holds.
\begin{enumerate}
\item[(1)] If $\l_2 \le \l_0 \le \l_1$ or $\l_0 \le \l_2 \le \l_1$, note that for any $(R_0,R_1,R_2) \in \bar \Rr_\mathrm{o}$,
\[
\l_0 R_0 + \l_1 R_1 + \l_2 R_2 \le \l_1 (R_0+R_1) + \l_2 R_2.
\] 
Thus
\begin{align*}
\max_{(R_0,R_1,R_2) \in \bar \Rr_\mathrm{o}} (\l_0 R_0 + \l_1 R_1 + \l_2 R_2) &\le \max_{(r_1,r_2) \in \bar \Rr} (\l_1 r_1 + \l_2 r_2)\\
&\le \max_{(r_0,r_1,r_2) \in \Cr_\mathrm{o}} (\l_0 r_0 + \l_1 r_1 + \l_2 r_2),
\end{align*}
where $\bar \Rr$ denotes the \emph{UV} outer bound in~\eqref{uv}. The last inequality follows because $\Cr_\mathrm{o}$ includes the private message capacity region $\Cr$.

\item[(2)] If $\l_1 \le \l_0 \le \l_2$ or $\l_0 \le \l_1 \le \l_2$, 
 note that for any $(R_0,R_1,R_2) \in \bar \Rr_\mathrm{o}$,
\[
\l_0 R_0 + \l_1 R_1 + \l_2 R_2 \le \l_1 R_1 + \l_2 (R_0+R_2).
\] 
Thus
\begin{align*}
\max_{(R_0,R_1,R_2) \in \bar \Rr_\mathrm{o}} (\l_0 R_0 + \l_1 R_1 + \l_2 R_2) &\le \max_{(r_1,r_2) \in \bar \Rr} (\l_1 r_1 + \l_2 r_2)\\
&\le \max_{(r_0,r_1,r_2) \in \Cr_\mathrm{o}} (\l_0 r_0 + \l_1 r_1 + \l_2 r_2),
\end{align*}
where $\bar \Rr$ denotes the \emph{UV} outer bound in~\eqref{uv}. The last inequality follows because $\Cr_\mathrm{o}$ includes the private message capacity region $\Cr$.
\item[(3)] If $\l_1 \le \l_2 \le \l_0$, note that for any $(R_0,R_1,R_2) \in \bar \Rr_\mathrm{o}$,
\begin{align*}
\lambda_0 R_0 + \lambda_2 R_2 + \lambda_1 R_1 &\leq \lambda_0 \min\{I(U_0;Y_1|S), I(U_0;Y_2|S)\} + \lambda_2 I(U_2;Y_2|U_0,S) + \lambda_1 I(X;Y_1|U_2,U_0,S)\\
&= \lambda_0 \min\{I(U_0;Y_1|S), I(U_0;Y_2|S)\} + \lambda_2 I(U_2;Y_2|U_0,S) + \lambda_1 H(Y_1|U_2,U_0,S) \\
&= \lambda_0 \min\{I(U_0;Y_1|S), I(U_0;Y_2|S)\} + \lambda_2 H(Y_2|U_0,S) \\
&\qquad + \lambda_1 H(Y_1|U_2,U_0,S)-\lambda_2 H(Y_2|U_2,U_0,S)\\
&= \lambda_0 \min\{I(U_0;Y_1|S), I(U_0;Y_2|S)\} + \lambda_2 p_2 H(f_1|U_0)+ \l_2 \bar{p}_2 H(f_2|U_0) \\
&\qquad+(\bar{p}_1\lambda_1-\bar{p}_2\lambda_2) H(f_2|U_2,U_0)+ (p_1\lambda_1-p_2\lambda_2)H(f_1|U_2,U_0)
\end{align*}
For a fixed $p(u_0,x)$, only the last two terms depend on $p(u_2|u_0,x)$. We now consider different ranges of $(\l_1,\lambda_2)$.
\begin{itemize}
\item If $\lambda_2 \geq p_1\lambda_1/p_2$, then for any fixed $p(u_0,x)$,
\begin{align*}
(\bar{p}_1\lambda_1-\bar{p}_2\lambda_2) H(f_2|U_2,U_0)+ (p_1\lambda_1-p_2\lambda_2)H(f_1|U_2,U_0) \leq 0
\end{align*}
with equality if $U_2=X$. Thus,
\begin{align*}
\lambda_0 R_0 + \lambda_2 R_2 + \lambda_1 R_1 &\leq \lambda_0 \min\{I(U_0;Y_1|S), I(U_0;Y_2|S)\} + \lambda_2 p_2 H(f_1|U_0)+ \l_2 \bar{p}_2 H(f_2|U_0)\\
&\le \max_{(r_0,r_1,r_2) \in \Cr_\mathrm{o}} (\lambda_0 r_0 + \lambda_2 r_2 + \lambda_1 r_1).
\end{align*}
\item If $\lambda_2 < p_1\lambda_1/p_2$, then for any fixed $p(u_0,x)$,
\begin{align*}
&(\bar{p}_1\lambda_1-\bar{p}_2\lambda_2) H(f_2|U_2,U_0)+(p_1\lambda_1-p_2\lambda_2)H(f_1|U_2,U_0) \\
&\qquad= (\lambda_1-\lambda_2)H(f_2|U_2,U_0) + (p_1\lambda_1-p_2\lambda_2)\{H(f_1|U_2,U_0)-H(f_2|U_2,U_0)\}\\
&\qquad= (\lambda_1-\lambda_2)H(f_2|U_2,U_0) + (p_1\lambda_1-p_2\lambda_2)\{H(f_1|f_2,U_2,U_0)-H(f_2|f_1,U_2,U_0)\}\\
&\qquad\leq (p_1\lambda_1-p_2\lambda_2)H(f_1|f_2,U_0)
\end{align*}
with equality if $U_2=f_2$. Thus, 
\begin{align*}
&\lambda_0 R_0 + \lambda_2 R_2 + \lambda_1 R_1 \leq \lambda_0 \min\{I(U_0;Y_1|S), I(U_0;Y_2|S)\} + \lambda_2 H(Y_2|U_0,S) + (p_1\lambda_1-p_2\lambda_2)H(f_1|f_2,U_0)\\
&\qquad= \lambda_0 \min\{I(U_0;Y_1|S), I(U_0;Y_2|S)\} + \lambda_2 p_2 H(f_1|U_0) +\lambda_2 \bar{p}_2 H(f_2|U_0)+(p_1\lambda_1-p_2\lambda_2)H(f_1|f_2,U_0)\\
&\qquad= \lambda_0 \min\{I(U_0;Y_1|S), I(U_0;Y_2|S)\} + \lambda_2 I(f_2;Y_2|U_0,S)+\lambda_1 p_1 H(f_1|f_2,U_0)\\
&\qquad= \lambda_0 \min\{I(U_0;Y_1|S), I(U_0;Y_2|S)\} + \lambda_2 I(f_2;Y_2|U_0,S)+\lambda_1 (I(f_1;Y_1|U_0,S) - I(f_1;f_2|U_0,S))\\
&\qquad\le \max_{(r_0,r_1,r_2) \in \Cr_\mathrm{o}} (\lambda_0 r_0 + \lambda_2 r_2 + \lambda_1 r_1).
\end{align*}
\end{itemize}
\item[(4)] If $\l_2 \le \l_1 \le \l_0$, note that for any $(R_0,R_1,R_2) \in \bar{\Rr}_\mathrm{o}$,
\begin{align*}
\lambda_0 R_0 + \lambda_1 R_1 + \lambda_2 R_2 &\leq \lambda_0 \min\{I(U_0;Y_1|S), I(U_0;Y_2|S)\} + \lambda_1 I(U_1;Y_1|U_0,S) + \lambda_2 I(X;Y_2|U_1,U_0,S)\\
&= \lambda_0 \min\{I(U_0;Y_1|S), I(U_0;Y_2|S)\} + \lambda_1 I(U_1;Y_1|U_0,S) + \lambda_2 H(Y_2|U_1,U_0,S) \\
&= \lambda_0 \min\{I(U_0;Y_1|S), I(U_0;Y_2|S)\} + \lambda_1 H(Y_1|U_0,S) \\
&\qquad\qquad+ \lambda_2 H(Y_2|U_1,U_0,S)-\lambda_1 H(Y_1|U_1,U_0,S) \\
&= \lambda_0 \min\{I(U_0;Y_1|S), I(U_0;Y_2|S)\} + \lambda_1 p_1 H(f_1|U_0) + \bar{p}_1 H(f_2|U_0)\\
&\qquad\qquad+ (p_2\lambda_2-p_1\lambda_1)H(f_1|U_1,U_0)+(\bar{p}_2\lambda_2-\bar{p}_1\lambda_1) H(f_2|U_1,U_0)
\end{align*}
For a fixed $p(u_0,x)$, only the last two terms depend on $p(u_1|u_0,x)$. We now consider different ranges of $(\lambda_1,\l_2)$.
\begin{itemize}
\item If $\lambda_1 \geq \bar{p}_2\lambda_2/\bar{p}_1$, then for any fixed $p(u_0,x)$,
\begin{align*}
(p_2\lambda_2-p_1\lambda_1)H(f_1|U_1,U_0)+(\bar{p}_2\lambda_2-\bar{p}_1\lambda_1) H(f_2|U_1,U_0)\leq 0
\end{align*}
with equality if $U_1=X$. Thus,
\begin{align*}
\lambda_0 R_0 + \lambda_1 R_1 + \lambda_2 R_2 &\leq \lambda_0 \min\{I(U_0;Y_1|S), I(U_0;Y_2|S)\} +  \lambda_1 p_1 H(f_1|U_0) + \bar{p}_1 H(f_2|U_0)\\
&\le \max_{(r_0,r_1,r_2) \in \Cr_\mathrm{o}} (\lambda_0 r_0 + \lambda_2 r_2 + \lambda_1 r_1).
\end{align*}
\item If $\lambda_1 < \bar{p}_2\lambda_2/\bar{p}_1$, then for any fixed $p(u_0,x)$,
\begin{align*}
&(p_2\lambda_2-p_1\lambda_1)H(f_1|U_1,U_0)+(\bar{p}_2\lambda_2-\bar{p}_1\lambda_1) H(f_2|U_1,U_0)\\
&\qquad= (\lambda_2 - \lambda_1) H(f_1|U_1,U_0) + ((\bar{p}_2\lambda_2-\bar{p}_1\lambda_1)\{H(f_2|U_1,U_0)-H(f_1|U_1,U_0)\}\\
&\qquad= (\lambda_2 - \lambda_1) H(f_1|U_1,U_0) + ((\bar{p}_2\lambda_2-\bar{p}_1\lambda_1)\{H(f_2|f_1,U_1,U_0)-H(f_1|f_2,U_1,U_0)\}\\
&\qquad\leq (\bar{p}_2\lambda_2-\bar{p}_1\lambda_1)H(f_2|f_1,U_0)
\end{align*}
with equality if $U_1=f_1$. Thus, 
\begin{align*}
\lambda_0 R_0 + \lambda_1 R_1 + \lambda_2 R_2 &\leq \lambda_0 \min\{I(U_0;Y_1|S), I(U_0;Y_2|S)\} + \lambda_1 H(Y_1|U_0,S) + (\bar{p}_2\lambda_2-\bar{p}_1\lambda_1)H(f_2|f_1,U_0)\\
&= \lambda_0 \min\{I(U_0;Y_1|S), I(U_0;Y_2|S)\} \\
&\qquad\qquad+ \lambda_1 p_1 H(f_1|U_0) +\lambda_1 \bar{p}_1 H(f_2|U_0)+(\bar{p}_2\lambda_2-\bar{p}_1\lambda_1)H(f_2|f_1,U_0)\\
&= \lambda_0 \min\{I(U_0;Y_1|S), I(U_0;Y_2|S)\} + \lambda_1 I(f_1;Y_1|U_0,S)+\lambda_2 \bar{p}_2 H(f_2|f_1,U_0)\\
&= \lambda_0 \min\{I(U_0;Y_1|S), I(U_0;Y_2|S)\} + \lambda_1 I(f_1;Y_1|U_0,S)+\lambda_2 I(X;Y_2|f_1,U_0,S)\\
&\le \max_{(r_0,r_1,r_2) \in \Cr_\mathrm{o}} (\lambda_0 r_0 + \lambda_2 r_2 + \lambda_1 r_1).
\end{align*}
\end{itemize}
\end{enumerate}
The proof of the converse is completed using Lemma~\ref{subseteq}.

\section{Proof of \lowercase{(ii)} - \MakeLowercase{(iv)} of Theorem~\ref{thm:dominance}}\label{proof:thm:dominance}
We show that if a DM-BC $(\Xcal, p(\yt_1,\yt_2), \mathcal{\Yt}_1 \times \mathcal{\Yt}_2)$ is less noisy, more capable, or dominantly c-symmetric, then the DM-BC $(\Xcal, p(y_1,y_2,s|x), (\Ycal_1,\Scal) \times (\Ycal_2,\Scal))$ is also less noisy, more capable, or dominantly c-symmetric, respectively.

\begin{enumerate}
\item[(ii)] For a less noisy DM-BC $p(\yt_1,\yt_2|x)$, $I(U;\Yt_1) \ge I(U;\Yt_2)$ for every $p(u,x)$.  Consider 
\begin{align*}
I(U;Y_1,S)-I(U;Y_2,S) &= I(U;Y_1|S)-I(U;Y_2|S) \\
&= p_1 I(U;\tilde{Y}_1) + \bar{p}_1 I(U;\tilde{Y}_2) -p_2 I(U;\tilde{Y}_1) - \bar{p}_2 I(U;\tilde{Y}_2) \\
&=(p_1-p_2)(I(U;\tilde{Y}_1)-I(U;\tilde{Y}_2))\\
&\geq 0.
\end{align*}
Thus the DM-BC $(\Xcal, p(y_1,s_1,y_2,s_2|x), (\Ycal_1,\Scal) \times (\Ycal_2,\Scal))$ is less noisy. Note that if $p_1 > p_2$, $I(U;Y_1,S)-I(U;Y_2,S) \ge 0$ if and only if $I(U;\tilde{Y}_1)-I(U;\tilde{Y}_2) \ge 0$.
\item[(iii)] For a more capable DM-BC $p(\yt_1,\yt_2|x)$, $I(X;\Yt_1) \ge I(X;\Yt_2)$ for every $p(u,x)$. Consider 
\begin{align*}
I(X;Y_1,S)-I(X;Y_2,S) &= I(X;Y_1|S)-I(X;Y_2|S)\\
&= p_1 I(X;\tilde{Y}_1) + \bar{p}_1 I(X;\tilde{Y}_2) -  p_2 I(X;\tilde{Y}_1) - \bar{p}_2 I(X;\tilde{Y}_2)\\
&= (p_1-p_2) (I(X;\tilde{Y}_1) - I(X;\tilde{Y}_2))\\
&\geq 0.
\end{align*}
Thus the DM-BC $(\Xcal, p(y_1,s_1,y_2,s_2|x), (\Ycal_1,\Scal) \times (\Ycal_2,\Scal))$ is more capable. Note that if $p_1 > p_2$, $I(X;Y_1,S)-I(X;Y_2,S) \ge 0$ if and only if $I(X;\tilde{Y}_1)-I(X;\tilde{Y}_2) \ge 0$.
\item[(iv)] For a dominantly c-symmetric DM-BC $p(\yt_1,\yt_2|x)$, let $\pi^1_j(y)$ and $\pi^2_j(y)$ be functions that satisfy
\begin{align*}
p_{\Yt_1|X}(\pi^1_j(\yt_1)|(i+j)_m) &= p_{\Yt_1|X}(\yt_1|i),\\
p_{\Yt_2|X}(\pi^2_j(\yt_2)|(i+j)_m) &= p_{\Yt_2|X}(\yt_2|i)
\end{align*}
for $i \in [0:m-1]$, where $(i+j)_m$ denotes $(i+j)$ mod $m$. To show that the DM-BC $(\Xcal, p(y_1,s_1,y_2,s_2|x), (\Ycal_1,\Scal) \times (\Ycal_2,\Scal))$ is dominantly c-symmetric, we first show that $X \rightarrow (Y_1,S)$ and $X \rightarrow (Y_2,S)$ are c-symmetric channels. Let 
\begin{align*}
\pi_j(y,s) = (\pi^s_j(y),s).
\end{align*}
Consider
\begin{align*}
p_{(Y_1,S)|X}((y,s)|i) &= p_{S}(s)p_{Y_1|X,S}(y|i,s)\\
				     &= p_{S}(s)p_{\Yt_{s_1}|X}(y|i)\\
				     &= p_{S}(s)p_{\Yt_{s_1}|X}(\pi^{s_1}_j(y)|(i+j)_m)\\
				     &= p_{S}(s)p_{Y_1|X,S}(\pi^{s_1}_j(y)|(i+j)_m, s)\\
				     &= p_{(Y_1,S)|X}((\pi^{s_1}_j(y),s)|(i+j)_m)\\
				     &= p_{(Y_1,S)|X}(\pi_j(y,s)|(i+j)_m).
\end{align*}
Thus $X \rightarrow (Y_1,S)$ is c-symmetric. Similarly we can show that $X \rightarrow (Y_2,S)$ is c-symmetric. To complete the proof we show that the inequality in~\eqref{condition:dominant} holds. Consider 
\begin{align*}
I(X;Y_1,S)_p - I(X;Y_2,S)_p &= I(X;Y_1|S)_p - I(X;Y_2|S)_p\\
					&=(p_1-p_2)(I(X;\Yt_1)_p - I(X;\Yt_2)_p)\\
					&\leq (p_1-p_2)(I(X;\Yt_1)_u - I(X;\Yt_2)_u)\\
					&=I(X;Y_1,S)_u - I(X;Y_2,S)_u.
\end{align*}
Thus the DM-BC $(\Xcal, p(y_1,s_1,y_2,s_2|x), (\Ycal_1,\Scal) \times (\Ycal_2,\Scal))$ is dominantly c-symmetric. 

%

\allowdisplaybreaks
\end{enumerate}
\section{Proof of Lemma~\ref{lemma:Gaussian}}\label{proof:lemma:Gaussian}
We first prove the lemma for $\l \ge 1$. For $\lambda \ge 1 $, consider 
\allowdisplaybreaks
\begin{align*}
&\max_{p(\mathbf{u},\mathbf{x})\colon \atop \E[\mathbf{X}^T\mathbf{X}] \le P} \big(I(\mathbf{X};\mathbf{Y_1}|\mathbf{U},S) + \lambda I(\mathbf{U};\mathbf{Y_2}|S)\big)= \max_{p(\mathbf{x}) \colon \atop \E[\mathbf{X}^T\mathbf{X}] \le P} \big(\lambda I(\mathbf{X};\mathbf{Y_2}|S) + \CC[I(\mathbf{X};\mathbf{Y_1}|S) - \lambda I(\mathbf{X};\mathbf{Y_2}|S)]\big)\\
&\qquad= \max_{K\succeq 0 \colon \atop \mathrm{tr}(K) \le P} \max_{p(\mathbf{x}):  \atop\E[\mathbf{X}\mathbf{X}^T] = K} \big(\lambda I(\mathbf{X};\mathbf{Y_2}|S) + \CC[I(\mathbf{X};\mathbf{Y_1}|S) - \lambda I(\mathbf{X};\mathbf{Y_2}|S)]\big)\\
&\qquad= \max_{K\succeq 0 \colon \atop \mathrm{tr}(K) \le P} \max_{p(\mathbf{x}): \E[\mathbf{X}\mathbf{X}^T] = K} \big(\lambda I(\mathbf{X};\mathbf{Y_2}|S) + \CC[I(\mathbf{X};\mathbf{Y_1}|S) - \lambda I(\mathbf{X};\mathbf{Y_2}|S)]\big)\\
&\qquad= \max_{K\succeq 0 \colon \atop \mathrm{tr}(K) \le P} \max_{p(\mathbf{x}): \E[\mathbf{X}\mathbf{X}^T] = K} \big(\lambda p_2 I(\mathbf{X};\mathbf{\Yt_1}) + \lambda \bar{p}_2 I(\mathbf{X};\mathbf{\Yt_2}) + \CC[(p_1 - \l p_2)I(\mathbf{X};\mathbf{\Yt_1}) + (\bar{p}_1 - \l\bar{p}_2)I(\mathbf{X};\mathbf{\Yt_2})]\big)\\
&\qquad\le \max_{K\succeq 0 \colon \atop \mathrm{tr}(K) \le P} \Big(\max_{p(\mathbf{x}): \E[\mathbf{X}\mathbf{X}^T] = K} \lambda p_2 I(\mathbf{X};\mathbf{\Yt_1}) + \max_{p(\mathbf{x}): \E[\mathbf{X}\mathbf{X}^T] = K}  \lambda \bar{p}_2 I(\mathbf{X};\mathbf{\Yt_2}) \\
&\qquad\qquad \qquad \qquad+ \max_{p(\mathbf{x}): \E[\mathbf{X}\mathbf{X}^T] = K} \CC[(p_1 - \l p_2) I(\mathbf{X};\mathbf{\Yt_1}) + (\bar{p}_1 - \l\bar{p}_2)I(\mathbf{X};\mathbf{\Yt_2})]\Big)\\
&\qquad\stackrel{(a)}{=} \max_{K\succeq 0 \colon \atop \mathrm{tr}(K) \le P} \Bigg(\lambda p_2 \log\frac{|G K G^T + N_1|}{|N_1|} + \lambda \bar{p}_2 \log\frac{|G K G^T + N_2|}{|N_2|}\\
&\qquad\qquad \qquad \qquad+  \max_{K_1 \colon \atop 0 \preceq K_1 \preceq K} \Big((p_1 - \l p_2) \log \frac{|G K_1 G^T + N_1|}{|N_1|} +(\bar{p}_1 - \l\bar{p}_2) \log \frac{|G K_1 G^T + N_2|}{|N_2|}\Big)\Bigg)\\
&\qquad= \max_{K\succeq 0 \colon \atop \mathrm{tr}(K) \le P}\max_{0 \preceq K_1 \preceq K} \Bigg(\lambda p_2 \log\frac{|G K G^T + N_1|}{|G K_1 G^T + N_1|} + \lambda \bar{p}_2 \log\frac{|G K G^T + N_2|}{|G K_1 G^T + N_2|} \\
&\qquad\qquad \qquad\qquad\qquad\qquad + p_1 \log \frac{|G K_1 G^T + N_1|}{|N_1|} + \bar{p}_1 \log \frac{|G K_1 G^T + N_2|}{|N_2|}\Bigg)\\
&\qquad= \max_{(R_1,R_2) \in \Cr_\mathrm{G}}(R_1+\l R_2).
\end{align*}
To show step $(a)$ for $\lambda \ge p_1/p_2$, note that $p_1-\l p_2 \le 0$ and $\bar{p}_1 - \l \bar{p}_2 \le 0$, and thus
\[
\max_{p(\mathbf{x})\colon \atop \E[\mathbf{X}\mathbf{X}^T] = K} \CC[(p_1 - \l p_2) I(\mathbf{X};\mathbf{\Yt_1}) + (\bar{p}_1 - \l\bar{p}_2)I(\mathbf{X};\mathbf{\Yt_2})] = 0.
\]
To show step $(a)$ for $1 \le \lambda < p_1/p_2$, note that $ (p_1 - \l p_2) I(\mathbf{X};\mathbf{\Yt_1}) + (\bar{p}_1 - \l\bar{p}_2)I(\mathbf{X};\mathbf{\Yt_2}) = (p_1 - \l p_2) (I(\mathbf{X};\mathbf{\Yt_1}) - \mu I(\mathbf{X};\mathbf{\Yt_2}))$ where $\mu = 1+(\l-1)/(p_1 - \l p_2) > 1$, and for $\mu > 1$,
\[
\max_{p(\mathbf{x})\colon \atop \E[\mathbf{X}\mathbf{X}^T] = K} \CC[I(\mathbf{X};\Yt_1) - \mu I(\mathbf{X};\Yt_2)] = \max_{K_1 \colon K \succeq K_1}  \left(\log \frac{|G K_1 G^T + N_1|}{|N_1|} - \mu \log \frac{|G K_1 G^T + N_2|}{|N_2|}\right).
\]

We now prove the lemma for $\lambda < 1$. Since $I(\mathbf{U};\mathbf{Y_2}|S) \le I(\mathbf{U};\mathbf{Y_1}|S)$ for any $p(\mathbf{u},\mathbf{x})$, it follows that
\begin{align*}
 \max_{p(\mathbf{u},\mathbf{x})\colon \atop \E[\mathbf{X}^T\mathbf{X}] \le P} \big(I(\mathbf{X};\mathbf{Y_1}|\mathbf{U},S) + \l I(\mathbf{U};\mathbf{Y_2}|S)\big) & \le \max_{p(\mathbf{x})\colon \atop \E[\mathbf{X}^T\mathbf{X}] \le P} I(\mathbf{X};\mathbf{Y_1}|S)\\
&= \max_{K\succeq 0 \colon \mathrm{tr}(K) \le P} \Big(p_1\log \frac{|G K G^T + N_1|}{|N_1|}+\bar{p}_1 \log \frac{|G K G^T + N_2|}{|N_2|}\Big)\\
& \le \max_{(R_1,R_2) \in \Cr_\mathrm{G}} (R_1 + \lambda R_2).
\end{align*}

\section{suboptimality of dirty paper coding}\label{appendix:dpc}
Consider a BC-TCS with scalar Gaussian channel components, i.e., $t=1$ in~\eqref{model:gaussian}. Let $U_1 \sim \mathcal{N}(0,1), U_2 \sim \mathcal{N}(0,1)$, and $\E[U_1 U_2]=\rho$, $X=aU_1+bU_2$, where $(a,b,\rho)$ satisfies the power constraint, $\E[X^2]=a^2+b^2+2ab\rho = T \le P$. Using dirty paper coding, $(R_1,R_2)$ is achievable if 
\begin{align*}
R_{1} &< p_1\log\left(1+\frac{(a+b\rho)^2}{b^2(1-\rho^2)+N_1}\right)+\bar{p}_1\log\left(1+\frac{(a+b\rho)^2}{b^2(1-\rho^2)+N_2}\right),\nonumber \\
R_{2} &<p_2\log\left(1+\frac{(b+a\rho)^2}{a^2(1-\rho^2)+N_1}\right)+\bar{p}_2\log\left(1+\frac{(b+a\rho)^2}{a^2(1-\rho^2)+N_2}\right),\nonumber \\
R_{1}+R_{2} &< p_1\log\left(1+\frac{(a+b\rho)^2}{b^2(1-\rho^2)+N_1}\right)+\bar{p}_1\log\left(1+\frac{(a+b\rho)^2}{b^2(1-\rho^2)+N_2}\right)\\
&\qquad +p_2\log\left(1+\frac{(b+a\rho)^2}{a^2(1-\rho^2)+N_1}\right)+\bar{p}_2\log\left(1+\frac{(b+a\rho)^2}{a^2(1-\rho^2)+N_2}\right)\nonumber -\log\left(\frac{1}{1-\rho^2}\right).\\
\end{align*}
Let this region be denoted by $\Rr_\mathrm{D}$. To show that dirty paper coding is suboptimal, we show that $ \max_{(r_1,r_2) \in \Rr_\mathrm{D}} (r_1+\lambda r_2) < \max_{(R_1,R_2) \in \Cr_\mathrm{G}} (R_1+\lambda R_2)$ for some $\l > 1$. Note that for $\lambda > 1$,
\begin{align}
\max_{(R_1,R_2) \in \Rr_\mathrm{D}} (R_1+\lambda R_2) &= \max_{a,b,\rho\colon a^2+b^2+2ab\rho=T \le P} \Big(p_1\log\left(\frac{T+N_1}{b^2(1-\rho^2)+N_1}\right)+\bar{p}_1\log\left(\frac{T+N_2}{b^2(1-\rho^2)+N_2}\right)-\log\left(\frac{1}{1-\rho^2}\right)\nonumber\\
&\qquad\qquad\qquad\qquad\qquad\qquad+\lambda p_2\log\left(\frac{T+N_1}{a^2(1-\rho^2)+N_1}\right)+\lambda \bar{p}_2\log\left(\frac{T+N_2}{a^2(1-\rho^2)+N_2}\right)\Big),\nonumber\\
\max_{(R_1,R_2)\in \Cr_\mathrm{G}}(R_1+\lambda R_2) &= \max_{\alpha \in [0:1], T \le P} \Big(p_1 \log\left(\frac{\alpha T+N_1}{N_1}\right) + \bar{p}_1 \log\left(\frac{\alpha T+N_2}{N_2}\right)\nonumber\\
&\qquad\qquad\qquad\qquad+ \lambda p_2 \log\left(\frac{T+N_1}{\alpha T+N_1}\right) + \lambda\bar{p}_2 \log\left(\frac{T+N_2}{\alpha T+N_2}\right)\Big).\label{cg}
\end{align}
Consider
\begin{align*}
&\max_{(R_1,R_2)\in \Cr_\mathrm{G}}(R_1+\lambda R_2) \\
&\qquad \stackrel{(a)}{\ge} p_1 \log\left(\frac{a^2(1-\rho^2)+N_1}{N_1}\right) + \bar{p}_1 \log\left(\frac{a^2(1-\rho^2)+N_2}{N_2}\right)+ \lambda p_2 \log\left(\frac{T+N_1}{a^2(1-\rho^2)+N_1}\right) + \lambda\bar{p}_2 \log\left(\frac{T+N_2}{a^2(1-\rho^2)+N_2}\right)\\
&\qquad\stackrel{(b)}{\ge} p_1\log\left(\frac{T+N_1}{b^2(1-\rho^2)+N_1}\right)+\bar{p}_1\log\left(\frac{T+N_2}{b^2(1-\rho^2)+N_2}\right)-\log\left(\frac{1}{1-\rho^2}\right)\\
&\qquad\qquad +\lambda p_2\log\left(\frac{T+N_1}{a^2(1-\rho^2)+N_1}\right)+\lambda \bar{p}_2\log\left(\frac{T+N_2}{a^2(1-\rho^2)+N_2}\right).
\end{align*}

Step $(a)$ follows by plugging in $\alpha = a^2(1-\rho^2)/T$ in~\eqref{cg}. 
To show step $(b)$ note that the difference between the LHS and RHS is
\begin{align}\label{lhs}
&p_1 \log \frac{(a^2(1-\rho^2)+N_1)(b^2(1-\rho^2)+N_1)}{N_1(1-\rho^2)(a^2+b^2+2ab\rho+N_1)} + \bar{p}_1 \log \frac{(a^2(1-\rho^2)+N_2)(b^2(1-\rho^2)+N_2)}{N_2(1-\rho^2)(a^2+b^2+2ab\rho+N_2)} \ge 0
\end{align}
because $(a^2(1-\rho^2)+N_j)(b^2(1-\rho^2)+N_j) - N_j(1-\rho^2)(a^2+b^2+2ab\rho+N_j) = (ab(1-\rho^2) - \rho N_j)^2 \ge 0$ for $j=1,2$. Equality holds for~\eqref{lhs} if and only if $(\rho,a) = (0,0)$ or $(\rho,b)=(0,0)$. Thus equality in step $(b)$ holds if and only if $\max_{(R_1,R_2)\in \Cr_\mathrm{G}}(R_1+\lambda R_2) = \max(C_1, \l C_2)$, which is in general not true. If $\max_{(R_1,R_2)\in \Cr_\mathrm{G}}(R_1+\lambda R_2) \neq \max(C_1, \l C_2)$, it follows that
\begin{align*}
\max_{(R_1,R_2)\in \Cr_\mathrm{G}}(R_1+\lambda R_2) &> \max_{(R_1,R_2)\in \Rr_\mathrm{D}}(R_1+\lambda R_2).
\end{align*}

\end{document}